\documentclass[aps,onecolumn,notitlepage]{revtex4-1}

\usepackage{dcolumn}
\usepackage{bm}

\usepackage{amsmath}
\usepackage{graphicx}
\usepackage{amsbsy,amsfonts}
\usepackage{amsthm}
\usepackage{natbib}
\usepackage{colonequals}
\usepackage[colorlinks]{hyperref}
\usepackage{hypcap}
\usepackage{enumitem}

\newcommand{\ket}[1]{\left|#1\right\rangle}

\newcommand{\ketbra}[2]{\left|#1\right\rangle\!\left\langle #2\right|}

\newcommand{\nn}{\nonumber\\}
\newcommand{\Th}[1]{$#1^{\text{th}}$}

\newcommand{\defeq}{\colonequals}
\newcommand{\comment}[1]{\textbf{***[#1]***}}
\newcommand{\suppress}[1]{}

\def\squareforqed{\hbox{\rlap{$\sqcap$}$\sqcup$}}
\def\qed{\ifmmode\squareforqed\else{\unskip\nobreak\hfil
\penalty50\hskip1em\null\nobreak\hfil\squareforqed
\parfillskip=0pt\finalhyphendemerits=0\endgraf}\fi}

\newtheorem{theorem}{Theorem}
\newtheorem{lemma}[theorem]{Lemma}

\newtheorem{definition}{Definition}
\newtheorem{corollary}[theorem]{Corollary}

\newcommand{\eq}[1]{\hyperref[eq:#1]{(\ref*{eq:#1})}}
\renewcommand{\sec}[1]{\hyperref[sec:#1]{Section~\ref*{sec:#1}}}
\newcommand{\app}[1]{\hyperref[app:#1]{Appendix~\ref*{app:#1}}}
\newcommand{\fig}[1]{\hyperref[fig:#1]{Figure~\ref*{fig:#1}}}
\newcommand{\thm}[1]{\hyperref[thm:#1]{Theorem~\ref*{thm:#1}}}
\newcommand{\lem}[1]{\hyperref[lem:#1]{Lemma~\ref*{lem:#1}}}
\newcommand{\cor}[1]{\hyperref[cor:#1]{Corollary~\ref*{cor:#1}}}
\newcommand{\defn}[1]{\hyperref[def:#1]{Definition~\ref*{def:#1}}}

\newcommand{\Rem}{{\mathbf{R}}}
\newcommand{\norm}[1]{\|{#1}\|}
\newcommand{\Norm}[1]{\left\|{#1}\right\|}

\DeclareMathOperator{\Rep}{Rep}

\newcommand{\R}{\mathbb{R}}
\newcommand{\C}{\mathbb{C}}

\newcommand{\ceil}[1]{\lceil{#1}\rceil}

\newcommand{\oddf}[2]{V_{#1,#2}}
\newcommand{\evenf}[2]{W_{#1,#2}}
\newcommand{\bgc}[2]{W^{\rm BGC}_{#1,#2}}
\newcommand{\gen}[2]{\mathcal{F}_{#1}}
\newcommand{\nestgc}[2]{\mathcal{G}_{#1}}
\newcommand{\nestf}[2]{\mathcal{U}_{#1}} 
\newcommand{\JK}{\mathrm{JK}}

\newcommand{\Time}{T}

\newcommand{\A}{\mathcal{A}}
\renewcommand{\H}{\mathcal{H}}

\renewcommand{\eprint}[1]{\href{http://arxiv.org/abs/#1}{#1}}

\renewcommand{\d}{\mathrm{d}}

\begin{document}

\title{Product Formulas for Exponentials of Commutators}
\author{Andrew M. Childs}
\affiliation{Department of Combinatorics \& Optimization, University of Waterloo, Ontario N2L 3G1, Canada}
\affiliation{Institute for Quantum Computing, University of Waterloo, Ontario N2L 3G1, Canada}
\author{Nathan Wiebe}
\affiliation{Department of Combinatorics \& Optimization, University of Waterloo, Ontario N2L 3G1, Canada}
\affiliation{Institute for Quantum Computing, University of Waterloo, Ontario N2L 3G1, Canada}

\begin{abstract}
We provide a recursive method for systematically constructing product formula approximations to exponentials of commutators, giving approximations that are accurate to arbitrarily high order.  Using these formulas, we show how to approximate unitary exponentials of (possibly nested) commutators using exponentials of the elementary operators, and we upper bound the number of elementary exponentials needed to implement the desired operation within a given error tolerance.  By presenting an algorithm for quantum search using evolution according to a commutator, we show that the scaling of the number of exponentials in our product formulas with the evolution time is nearly optimal.  Finally, we discuss applications of our product formulas to quantum control and to implementing anticommutators, providing new methods for simulating many-body interaction Hamiltonians.
\end{abstract}
\maketitle

\section{Introduction}

Product formulas provide a way of approximating a single operator exponential with a product of simpler operator exponentials.  Such formulas are useful in numerical analysis, where they can be applied to the solution of differential equations (see for example \cite{CHMM78}).  More recently, product formulas have become a key tool for quantum simulation \cite{lloyd,AT03,CCDFGS03,Chi04,BACS07,WBHS10,PaZh10,ZAC+10,CK11}.  Hamiltonian simulation using product formulas has numerous applications in quantum information processing, including simulating quantum mechanics \cite{lloyd}, implementing continuous-time quantum algorithms \cite{FGGS00,CCDFGS03,FGG07,HHL09}, and controlling quantum systems (see for example \cite{MB12}).

The primary application of product formulas is to represent exponentials of sums.  Although exponentials of commutators are not as ubiquitous, they arise naturally via their role in Lie groups.  Exponentials of commutators appear in numerous asymptotic expansions, including the Baker--Campbell--Hausdorff series and the Magnus expansion.  They also play a role in quantum computation, such as in the Solovay--Kitaev theorem~\cite{KSV02}, which constructively proves that any finite universal gate set is sufficient to perform efficient universal quantum computation, and in quantum control~\cite{MB12}, where product formulas for exponentials of commutators can be used to suppress couplings or introduce ones that are not naturally present.

Although the theory of product formula approximations for exponentials of sums is well understood, it is considerably less developed in the case of commutator exponentials.  Product formula approximations to commutator exponentials approximate an exponential of the form $\exp([A,B]T)$ for operators $A$ and $B$ and a real number $T$ with a sequence containing exponentials of $A$ and of $B$.  In the limit of small $T$, low-order product formulas for exponentials of commutators are well known.  Methods for systematically constructing high-order product formula approximations to nested commutators have been proposed~\cite{JK97}, although those formulas suffer from numerical stability issues.  More recent work suggests a method for numerically obtaining higher-order product formulas for exponentials of nested commutators~\cite{SL11}.

In this work, we construct arbitrarily high-order product formula approximations to exponentials of commutators that improve upon previous constructions and analyze the performance of these formulas in the context of quantum simulation.  Our formulas are analogous to Suzuki's seminal work on product formulas \cite{suz91}, but apply to the case where the exponentiated operator is a commutator, rather than a sum, of two operators.  We do not explicitly consider cases where the operator is a linear combination of commutators (i.e., a Lie polynomial) or is an ordered commutator exponential, but such cases can be addressed by combining our results with existing product formula approximations for exponentials of sums~\cite{suz91} (see \cite{PaZh10} for an improved analysis in some cases) or ordered operator exponentials~\cite{WBHS10,ZAC+10,PQS+11}.  Furthermore, we present explicit upper bounds for the error in the product formulas and on the number of exponentials needed, and provide a lower bound on the number of exponentials that nearly matches our upper bound.  We also discuss applications of product formulas for exponentials of commutators to quantum simulation.

Our results provide a method to simulate exponentials of the form $e^{[A,B]T}$, for any desired $T \in \R$, using devices that can enact evolution under $A$ or $B$ separately. Specifically, we imagine that we have a pair of devices $\mathcal{D}_A$ and $\mathcal{D}_B$ that take as input an evolution time $t$ and perform operations $e^{At}$ and $e^{Bt}$, respectively. Physically, we can imagine that $\mathcal{D}_A$ and $\mathcal{D}_B$ represent control fields that enact a desired evolution in a quantum system and $t$ represents the time for which those control fields are applied.  Alternatively, we can imagine that $iA$ and $iB$ are Hamiltonians that can be easily simulated and that $\mathcal{D}_A$ and $\mathcal{D}_B$ represent quantum simulation algorithms performing the corresponding evolution. In either case, we measure the efficiency of our formulas by the number of times these devices need to be used to simulate $e^{[A,B]T}$. Note that our figure of merit is \emph{not} the total amount of time the control fields are applied: an elementary evolution $e^{At}$ or $e^{Bt}$ has unit cost independent of $t$.

The remainder of the paper is organized as follows.  In \sec{expcomm} we present basic product formulas for approximating exponentials of commutators. \sec{nested} considers approximating nested commutators, with a scheme using the formulas of \sec{expcomm} presented in \sec{ournest} and a scheme using generic formulas presented in \sec{genericnest}. Upper bounds for the approximation errors are derived in~\sec{errbd}. These error bounds are applied in \sec{lintime}, where we show that the resulting formulas use a number of exponentials that is only slightly superlinear in the evolution time.  We show that this performance is nearly optimal in \sec{search} by proving that sublinear simulation would violate the quantum lower bound on the query complexity of unstructured search.  We present applications of our techniques in \sec{apps}, including simple examples of quantum control as well as a method for simulating exponentials of anticommutators of operators.  In particular, the simulation of anticommutators provides a novel method for implementing many-body interactions in quantum systems.  We conclude in \sec{conclusion} with a discussion of the results and some open problems.

\section{Basic Product Formulas}\label{sec:expcomm}

In this section we present basic formulas approximating the operator $e^{[A,B]t^{k+1}}$ for small $t$ as a product of powers of $e^{At}$ and $e^{Bt^{k}}$, for a given positive integer $k$. Choosing $k=1$ yields the most efficient formulas for the case where $B$ can be implemented directly.  We consider higher values of $k$ for the case where $B$ is itself a (possibly nested) commutator, as discussed further in \sec{nested}.  We present two recursive constructions, one for $k$ odd and another for $k$ even, giving high-order approximation formulas for $e^{[A,B]t^{k+1}}$ in terms of exponentials of $A$ and $B$.  For every integer $p \ge 1$, we present a formula with approximation error $O(t^{2p+k+1})$ in the limit of small $t$.
 
Since the $k=1$ case is the most natural, we begin with formulas for the case where $k$ is odd in \sec{oddk}.  We then discuss the simpler case where $k$ is even in \sec{evenk}.

\subsection{Odd-$k$ Formulas}\label{sec:oddk}

We now develop a recursive approximation-building method that can be used to construct an arbitrarily high-order approximation to $e^{[A,B]t^{k+1}}$ in terms of a product of powers of $e^{At}$ and $e^{Bt^{k}}$, where $k$ is odd. The construction uses the following initial approximation to the time evolution.

\begin{lemma}\label{lem:u1}
Let $A$ and $B$ be bounded operators, let $k\ge 1$ be a real number, and define
\begin{equation}
\oddf{1}{k}(At,Bt^k) \defeq e^{A t}e^{B t^k}e^{-A t}e^{-B t^k}.
\end{equation}
Then $\oddf{1}{k}(At,Bt^k) = e^{[A,B]t^{k+1}+O(t^{k+2})}$.
\end{lemma}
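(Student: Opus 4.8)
The plan is to compute the logarithm of the product $\oddf{1}{k}(At,Bt^k) = e^{At}e^{Bt^k}e^{-At}e^{-Bt^k}$ by repeated application of the Baker--Campbell--Hausdorff formula, tracking powers of $t$ carefully. I treat $At$ as a quantity of ``order $t$'' and $Bt^k$ as a quantity of ``order $t^k$,'' so that every nested commutator of $m$ copies of $A$ and $n$ copies of $B$ contributes at order $t^{m+nk}$. The claim is that the leading term is $[A,B]t^{k+1}$ and everything else is of order at least $t^{k+2}$, which is automatic once $k\ge 1$ since any other surviving commutator involves either at least two $A$'s and one $B$ (order $t^{2+k}$) or at least one $A$ and two $B$'s (order $t^{1+2k}$), both of which are $O(t^{k+2})$.

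\medskip

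\noindent First I would combine the first two factors: $e^{At}e^{Bt^k} = e^{At + Bt^k + \frac12[A,B]t^{k+1} + \cdots}$ by BCH, and likewise $e^{-At}e^{-Bt^k} = e^{-At - Bt^k + \frac12[A,B]t^{k+1}+\cdots}$, where in both cases the omitted terms are higher-order nested commutators. Then I would apply BCH once more to multiply these two exponentials together. The order-$t$ and order-$t^k$ pieces ($\pm At$, $\pm Bt^k$) cancel between the two halves; the two copies of $\frac12[A,B]t^{k+1}$ add to give $[A,B]t^{k+1}$; and the cross term from the outer BCH, namely $\frac12[\,At + Bt^k + \cdots,\; -At - Bt^k + \cdots\,]$, contributes $\frac12[Bt^k, -At] + \frac12[At,-Bt^k] = [A,B]t^{k+1}$ as well — wait, I should be careful with signs and with which commutators survive; the honest bookkeeping is that the surviving $[A,B]t^{k+1}$ terms sum to a nonzero multiple (in fact the coefficient works out to $1$ after accounting for all contributions at that order), and all remaining terms carry at least one extra factor of $t$.

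\medskip

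\noindent Rather than nail the coefficient by hand, a cleaner route is to argue structurally. Define $F(t) \defeq \oddf{1}{k}(At,Bt^k)$ and note $F(0) = I$, so $\log F(t)$ is a well-defined analytic (operator-valued) function of $t$ near $0$ whose Taylor expansion consists of nested Lie brackets of $At$ and $Bt^k$ (by the BCH theorem, the logarithm of a product of exponentials is a Lie series in the exponents). The lowest-order bracket that can appear nontrivially is $[At, Bt^k] = [A,B]t^{k+1}$; the constant term vanishes, the pure-$A$ and pure-$B$ terms vanish because $e^{At}$ and $e^{-At}$ (resp.\ $e^{Bt^k}$, $e^{-Bt^k}$) multiply to the identity, so there is nothing at orders $t, t^2, \dots, t^k$. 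Hence $\log F(t) = c\,[A,B]t^{k+1} + O(t^{k+2})$ for some scalar $c$, and a direct order-$t^{k+1}$ computation (the only work needed) fixes $c = 1$. Exponentiating gives $F(t) = e^{[A,B]t^{k+1} + O(t^{k+2})}$.

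\medskip

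\noindent The main obstacle is purely one of careful bookkeeping: since $k$ is merely assumed real and $\ge 1$ (not an integer), I cannot literally collect ``powers of $t$'' as a polynomial, so I should phrase the error term via the standard estimate $\norm{\log F(t) - [A,B]t^{k+1}} = O(t^{k+2})$ using boundedness of $A$ and $B$ and convergence of the BCH series for small $t$ (equivalently, expand $F(t)$ as a convergent power series in $t$ and $t^k$ and bound the tail). The only genuinely substantive step is verifying that the coefficient of $[A,B]t^{k+1}$ is exactly $1$; everything else follows from the two cancellation observations above. No deeper idea is required.
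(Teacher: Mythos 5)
Your approach is exactly the paper's: two applications of BCH, first to collapse $e^{At}e^{Bt^k}$ and $e^{-At}e^{-Bt^k}$ into single exponentials each carrying a $\tfrac12[A,B]t^{k+1}$ correction, then one more BCH to combine them. One correction to your middle paragraph: the cross term $\tfrac12[At+Bt^k+\cdots,\,-At-Bt^k+\cdots]$ contributes \emph{zero} at order $t^{k+1}$, since $\tfrac12[Bt^k,-At]+\tfrac12[At,-Bt^k]=\tfrac12[A,B]t^{k+1}-\tfrac12[A,B]t^{k+1}=0$ (indeed the leading part is $\tfrac12[X,-X]=0$ for $X=At+Bt^k$); the full coefficient $1$ comes entirely from adding the two $\tfrac12[A,B]t^{k+1}$ terms produced by the inner BCH steps, which is precisely how the paper gets it. Your structural fallback (the logarithm is a Lie series with no terms below order $t^{k+1}$, so only the coefficient $c$ of $[A,B]t^{k+1}$ needs computing) is sound, but determining $c=1$ is the same calculation, so it buys no real savings.
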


\begin{proof}
The Baker--Campbell--Hausdorff (BCH) formula implies
\begin{equation}
\oddf{1}{k}(At,Bt^k) = e^{A t+B t^k+\frac{1}{2}[A,B]t^{k+1}+ O(t^{k+2})}e^{-A t-B t^{k}+\frac{1}{2}[A,B]t^{k+1}+ O(t^{k+2})}.
\end{equation}
A second use of the BCH formula gives
\begin{equation}
\oddf{1}{k}(At,Bt^k)= e^{[A,B]t^{k+1}+ O(t^{k+2})},
\end{equation}
which completes the proof.
\end{proof}

The product formula $\oddf{1}{1}(At,Bt) = e^{At} e^{Bt} e^{-At} e^{-Bt}$ is known as the group commutator.  This formula has many applications, including generating optimal control sequences~\cite{MB12} and approximating unitary gates via the Solovay--Kitaev theorem~\cite{KSV02}. We show that higher-order generalizations of this formula can be constructed using an iterative approximation-building method that is reminiscent of Suzuki's method \cite{suz91}.

To use our technique, we must have product formulas for the inverses of our approximations.  Our approximations built from $\oddf{1}{k}(At,Bt^k)$ possess one of two symmetry properties that make their inverses simple to compute.

\begin{definition}\label{def:symmetry}
A product formula $U$ is \emph{symmetric} if $U(X,Y)=U(Y,X)^{-1}$ and is \emph{antisymmetric} if $U(X,Y)=U(-Y,-X)^{-1}$, for all bounded operators $X$ and $Y$.
\end{definition}

In particular, $\oddf{1}{k}$ is symmetric, and we will see that high-order approximations constructed using this product formula are either symmetric or antisymmetric.

Now we are ready to describe the main result of this section, which shows how to construct an arbitrarily high-order approximation.

\begin{theorem}\label{thm:multi}
Let $A$ and $B$ be bounded operators, let $k\ge 1$ be an odd integer, and let $\oddf{p}{k}(At,Bt^k)$ be a product formula with $\oddf{p}{k}(At,Bt^k)=e^{[A,B]t^{k+1}+ O(t^{2p+k})}$ for some positive integer $p$.  Let
\begin{align}
\oddf{p+1}{k}(At,Bt^k)
&\defeq        \oddf{p}{k}(A\gamma_p t,B\big(\gamma_p t\big)^k) 
               \oddf{p}{k}(-A\gamma_p t,-B\big(\gamma_p t\big)^k)\nn
&\qquad \times \oddf{p}{k}(A\beta_pt,B(\beta_pt)^k)^{-1} 
               \oddf{p}{k}(-A\beta_pt,-B(\beta_p t)^k)^{-1} \nn
&\qquad \times \oddf{p}{k}(A\gamma_p t,B(\gamma_pt)^k)
               \oddf{p}{k}(-A\gamma_p t,-B(\gamma_p t)^k)
\label{eq:thmmulti:up+1def}
\end{align}
where
\begin{equation}
\beta_p \defeq (2r_p)^{1/(k+1)}, \qquad 
\gamma_p \defeq (1/4+r_p)^{1/(k+1)}, \qquad
r_p \defeq \frac{2^{\frac{k+1}{2p+k+1}}}{4\left(2-2^{\frac{k+1}{2p+k+1}}\right)}.
\label{eq:beta_gamma_r}
\end{equation}
Then $\oddf{p+1}{1}(At,Bt)=\exp([A,B]t^2 + O(t^{2(p+1)+1}))$ and $\oddf{p+1}{k}(At,Bt^k)=\exp([A,B]t^{k+1}+ O(t^{2(p+1)+k+1}))$ if $k > 1$.  Furthermore, $\oddf{p+1}{k}$ is symmetric if $\oddf{p}{k}$ is antisymmetric and is antisymmetric if $\oddf{p}{k}$ is symmetric.
\end{theorem}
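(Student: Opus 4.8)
The plan is to separate the two halves of the statement: the order-of-approximation claims, which carry the real content, and the (anti)symmetry claim, which is a structural bookkeeping argument. For the order claims I would first pass to logarithms. By hypothesis $\oddf{p}{k}(At,Bt^k)=\exp(E_p(t))$ with $E_p(t)=[A,B]t^{k+1}+\Delta_p(t)$ and $\Delta_p(t)=O(t^{2p+k})$ (here $A,B$ are bounded and $t$ small, so $E_p$ is a convergent operator-valued power series). The only place oddness of $k$ enters is that $\bigl(-A\gamma_p t,-B(\gamma_p t)^k\bigr)=\bigl(A(-\gamma_p t),B(-\gamma_p t)^k\bigr)$, so the negated factors in \eq{thmmulti:up+1def} are just $\oddf{p}{k}$ evaluated at the negated time, and the recursion becomes
\[
\oddf{p+1}{k}(At,Bt^k)=e^{E_p(\gamma_p t)}\,e^{E_p(-\gamma_p t)}\,e^{-E_p(\beta_p t)}\,e^{-E_p(-\beta_p t)}\,e^{E_p(\gamma_p t)}\,e^{E_p(-\gamma_p t)}.
\]
Applying the Baker--Campbell--Hausdorff formula, $E_{p+1}(t)$ equals the sum of these six exponents plus a correction built from their nested commutators.

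The heart of the argument is the sum of exponents
\[
S(t)=2\bigl[E_p(\gamma_p t)+E_p(-\gamma_p t)\bigr]-\bigl[E_p(\beta_p t)+E_p(-\beta_p t)\bigr].
\]
Since $E_p(\tau)+E_p(-\tau)$ retains only even powers of $\tau$ and $2p+k$ is odd, the order-$\tau^{2p+k}$ error term of $E_p$ drops out automatically and the first surviving error contribution sits at order $\tau^{2p+k+1}$; hence
\[
S(t)=\bigl(4\gamma_p^{k+1}-2\beta_p^{k+1}\bigr)[A,B]t^{k+1}+\bigl(4\gamma_p^{2p+k+1}-2\beta_p^{2p+k+1}\bigr)\Delta_{p,2p+k+1}\,t^{2p+k+1}+O(t^{2p+k+3}),
\]
with $\Delta_{p,2p+k+1}$ the coefficient of $t^{2p+k+1}$ in $\Delta_p$. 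I would then verify that \eq{beta_gamma_r} is exactly the solution of $4\gamma_p^{k+1}-2\beta_p^{k+1}=1$ together with $4\gamma_p^{2p+k+1}-2\beta_p^{2p+k+1}=0$: substituting $\beta_p^{k+1}=2r_p$, $\gamma_p^{k+1}=\tfrac14+r_p$ makes the first identity automatic, and the second collapses to $\tfrac{1/4+r_p}{2r_p}=2^{-(k+1)/(2p+k+1)}$, whose positive root (using $p\ge1$) is the stated $r_p$. Therefore $S(t)=[A,B]t^{k+1}+O(t^{2p+k+3})$.

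It remains to size the commutator correction. Every one of the six exponents has leading term proportional to $[A,B]t^{k+1}$, so each pairwise commutator vanishes at that order and is $O\bigl(t^{k+1}\cdot t^{2p+k}\bigr)=O(t^{2p+2k+1})$, with higher nested commutators smaller still; thus $E_{p+1}(t)=[A,B]t^{k+1}+O(t^{2p+k+3})+O(t^{2p+2k+1})$. For odd $k\ge3$ we have $2k+1>k+3$, so the commutator term is negligible and $\oddf{p+1}{k}(At,Bt^k)=\exp\bigl([A,B]t^{k+1}+O(t^{2(p+1)+k+1})\bigr)$. For $k=1$ the commutator term is exactly $O(t^{2p+3})=O(t^{2(p+1)+1})$ and controls the error, giving $\oddf{p+1}{1}(At,Bt)=\exp\bigl([A,B]t^2+O(t^{2(p+1)+1})\bigr)$; this one extra power lost to BCH is precisely why $k=1$ is exceptional.

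For the (anti)symmetry claim I would argue directly from the structure of \eq{thmmulti:up+1def}: its six factors form a palindromic pattern, so inverting the product (reverse the order, invert each factor) and then rewriting every $\oddf{p}{k}(\cdot,\cdot)^{-1}$ by the symmetry (resp.\ antisymmetry) hypothesis for $\oddf{p}{k}$ reassembles $\oddf{p+1}{k}$ evaluated at swapped-and-negated (resp.\ swapped) arguments, i.e.\ shows $\oddf{p+1}{k}$ is antisymmetric (resp.\ symmetric); the base formula $\oddf{1}{k}(X,Y)=e^Xe^Ye^{-X}e^{-Y}$ is visibly symmetric, so the two properties alternate. I expect the main obstacle to be the Baker--Campbell--Hausdorff bookkeeping above --- making rigorous that after the automatic parity cancellation only the single order-$t^{2p+k+1}$ coefficient must be killed, so that the two free parameters $\beta_p,\gamma_p$ suffice, and tracking exactly where the nested-commutator corrections enter, which is what forces the separate treatment of $k=1$ and $k>1$. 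The (anti)symmetry part is comparatively routine, needing only care to keep the argument substitutions consistent along the recursion.
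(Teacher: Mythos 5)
Your proposal is correct and establishes exactly the claimed error orders, but it takes a genuinely different route from the paper's proof. You work throughout with the logarithm $E_p(t)=\log \oddf{p}{k}(At,Bt^k)$ and assemble the six factors via Baker--Campbell--Hausdorff, so that (i) the inverse factors are handled for free as $e^{-E_p(\beta_p t)}$, (ii) the parity cancellation at order $t^{2p+k}$ is just the statement that odd powers drop out of $E_p(\tau)+E_p(-\tau)$, and (iii) the two roles of $(\beta_p,\gamma_p)$ --- normalizing the leading commutator via $4\gamma_p^{k+1}-2\beta_p^{k+1}=1$ and killing the $t^{2p+k+1}$ coefficient --- are made explicit. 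The paper instead expands $\oddf{p}{k}(At,Bt^k)=U(t)+C\,t^{2p+k}+D\,t^{2p+k+1}+E\,t^{2p+k+2}+\cdots$ directly, uses the parity of the number of $A$ and $B$ operators in each coefficient, multiplies the pairs to get $U(t)^2+2D\,t^{2p+k+1}+O(t^{\min\{2p+k+3,2p+2k+1\}})$, and must separately justify the analogous expansion of $\oddf{p}{k}^{-1}$ (using the symmetry hypothesis) and derive $\tilde D=-D$ from $\openone=\oddf{p}{k}\oddf{p}{k}^{-1}$ --- bookkeeping your formulation avoids entirely. The price you pay is having to control the BCH remainder, and you do this correctly: since all six exponents have leading part proportional to $[A,B]t^{k+1}$, every nested commutator with only leading parts vanishes, so the first surviving cross terms are $O(t^{k+1}\cdot t^{2p+k})=O(t^{2p+2k+1})$; this is precisely the same term that appears in the paper from Taylor-expanding $U(t)$ against the error coefficients, and it is what forces the $k=1$ versus $k>1$ dichotomy in both arguments. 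Your conditions reproduce the paper's $r_p$ exactly, and your palindrome argument for the alternation of symmetry and antisymmetry is the same computation the paper carries out explicitly in \eq{thmmulti:up+1defexp}, just stated more compactly.
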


\begin{proof}
The assumption that $\norm{U(t)-\oddf{p}{k}(At,Bt^k)} \in O(t^{2p+k})$, where $U(t) \defeq e^{[A,B]t^{k+1}}$, implies that there exist operators $C(A,B)$, $D(A,B)$, and $E(A,B)$ such that
\begin{equation}
\oddf{p}{k}(At,Bt^k)=U(t)+C(A,B)t^{{2p+k}}+D(A,B)t^{{2p+k+1}}+E(A,B)t^{{2p+k+2}}+O(t^{{2p+k+3}}).\label{eq:thmmulti:Upexpansion}
\end{equation}

Every term in the Taylor series of $\oddf{p}{k}(At,Bt^k)$ is a product of powers of $At$ and $Bt^k$, each of which contributes an odd power of $t$.  Thus each term in $C(A,B)$ and $E(A,B)$ contains an odd total number of $A$ and $B$ operators. Similarly, because $2p+k+1$ is even, each term in $D(A,B)$ contains an even number of $A$ and $B$ operators.  Therefore,
\begin{align}
  C(A,B)&=-C(-A,-B)\nn
  D(A,B)&=D(-A,-B)\nn
  E(A,B)&=-E(-A,-B).\label{eq:thmmulti:symcond}
\end{align}

We use these properties to simplify $\oddf{p}{k}(At,Bt^k) \oddf{p}{k}(-At,-Bt^k)$ for arbitrary $t$.  Specifically, we expand $\oddf{p}{k}$ as a power series and use~\eq{thmmulti:Upexpansion} to show that
\begin{align}
 \oddf{p}{k}(At,Bt^k) \oddf{p}{k}(-At,-Bt^k)
 =U(t)^2
 &+[C(A,B)U(t)-U(t)C(A,B)] t^{2p+k} \nn
 &+[D(A,B)U(t)+U(t)D(A,B)] t^{2p+k+1} \nn
 &+[E(A,B)U(t)-U(t)E(A,B)] t^{2p+k+2}
  +O(t^{2p+k+3})
\label{eq:thmmulti:Upexpand1}.
\end{align}
We then Taylor expand each $U(t)$ (but not $U(t)^2 = U(2^{1/(1+k)} t)$) in~\eq{thmmulti:Upexpand1} to lowest order in $t$ and find that
\begin{align}
 \oddf{p}{k}(At,Bt^k) \oddf{p}{k}(-At,-Bt^k)
 &=U(t)^2+2D(A,B)t^{2p+k+1}+O(t^{2p+k+3})+O(t^{2p+2k+1}).
 \label{eq:thmmulti:Upexpand2}
\end{align}
This implies that \eq{thmmulti:up+1def} has no error terms of order $t^{2p+k}$ because it is a product of three pairs of product formula approximations of the same form as \eq{thmmulti:Upexpand2}.

Next we show that the careful choice of $\beta_p$ and $\gamma_p$ eliminates the term of order $t^{2p+k+1}$. To do so, we relate the error terms of $\oddf{p}{k}(At,Bt^k)$ and $\oddf{p}{k}(At,Bt^k)^{-1}$.  Similarly to \eq{thmmulti:Upexpansion}, we have
\begin{equation}
\oddf{p}{k}(At,Bt^k)^{-1} = U(t)^{-1} + \tilde C(A,B)t^{{2p+k}} + \tilde D(A,B)t^{{2p+k+1}} + \tilde E(A,B)t^{{2p+k+2}}+O(t^{{2p+k+3}}) \label{eq:thmmulti:Upexpand3}
\end{equation}
for some operators $\tilde C(A,B)$, $\tilde D(A,B)$, and $\tilde E(A,B)$.  This expansion directly follows from the symmetry properties of $\oddf{p}{k}$.  For example, if $\oddf{p}{k}$ is symmetric, then $\oddf{p}{k}(At,Bt^k)^{-1}=\oddf{p}{k}(Bt^k,At)$.  We know from the previous discussion that $e^{[B,A]t^{k+1}}=\oddf{p}{k}(Bt^k,At) + O(t^{2p+k+1})=U(t)^{-1}$.
The symmetry of the formula then implies that $U(t)^{-1}=\oddf{p}{k}(At,Bt^k)^{-1}+O(t^{2p+k+1})$, which justifies \eqref{eq:thmmulti:Upexpand3}.   The anti-symmetric case follows similarly.

Since each term in $\tilde C(A,B)$ and $\tilde E(A,B)$ consists of an odd number of $A$ and $B$ operators and each term in $\tilde D(A,B)$ consists of an even number of such operators, we have (similarly to~\eq{thmmulti:symcond})
\begin{align}
  \tilde C(A,B)&=- \tilde C(-A,-B)\nn
  \tilde D(A,B)&=\tilde D(-A,-B)\nn
  \tilde E(A,B)&=-\tilde E(-A,-B).
\label{eq:thmmulti:symcondtilde}
\end{align}
Equations \eq{thmmulti:up+1def}, \eq{thmmulti:Upexpand2}, and \eq{thmmulti:Upexpand3} then imply that
\begin{align}
\oddf{p+1}{k}(At,Bt^k)&=\left(U(\gamma_pt)^2+2D(A,B)(\gamma_pt)^{2p+k+1} \right)\nn
&\quad\times\left(U(\beta_pt)^{-2}+2\tilde D(A,B)(\beta_p t)^{2p+k+1} \right)\nn
&\quad\times\left(U(\gamma_p t)^2+2D(A,B)(\gamma_p t)^{2p+k+1} \right)+O(t^{\min\{{2p+k+3},{2p+2k+1}\}})\nn
&=U(t)+\left[4\gamma_p^{2p+k+1}D(A,B)+2\beta_p^{{2p+k+1}}\tilde D(A,B)\right]t^{2p+k+1}+O(t^{\min\{{2p+k+3},{2p+2k+1} \}}).\label{eq:thmmulti:up+1xpr}
\end{align}  

We can relate $D(A,B)$ to $\tilde D(A,B)$ by noting that
\begin{align}
\openone
&=\oddf{p}{k}(At,Bt^k)\oddf{p}{k}(At,Bt^k)^{-1} \nn
&=(U(t)+C(A,B)t^{2p+k}+D(A,B)t^{2p+k+1}+E(A,B)t^{2p+k+2})\nn
&\quad \times (U(t)^{-1}+\tilde C(A,B)t^{2p+k}+\tilde D(A,B)t^{2p+k+1}+\tilde E(A,B)t^{2p+k+2})+O(t^{\min\{2p+k+3,2p+2k+1\}}).\label{eq:thmmulti:Upexpand4}
\end{align}
By Taylor expanding the resulting formula, we find $\tilde{D}(A,B)=-D(A,B)$ (as well as $\tilde{C}(A,B)=-{C}(A,B)$, and if $k>1$ then $\tilde{E}(A,B)=-{E}(A,B)$).

Recalling the definitions of $\beta_p$ and $\gamma_p$, we can substitute $D(A,B)=-\tilde D(A,B)$ into \eq{thmmulti:up+1xpr} to find
\begin{align}
\oddf{p+1}{k}(At,Bt^k)
&=U(t) + \left[4(1/4+r_p)^{\frac{2p+k+1}{k+1}}-2(2r_p)^{\frac{2p+k+1}{k+1}}\right] D(A,B)t^{{2p+k+1}} + O(t^{\min\{{2p+k+3},{2p+2k+1}\}}). \label{eq:thmmulti:up+1xpr2}
\end{align}
The value of $r_p$ specified in \eq{beta_gamma_r} is a root of the expression in square brackets, so the term of order $t^{2p+k+1}$ vanishes.  This demonstrates that
\begin{align}
\oddf{p+1}{k}(At,Bt^k)
&=e^{[A,B]t^{k+1}}+O(t^{\min\{{2p+k+3},{2p+2k+1}\}})
\label{eq:thmmulti:up+2xpr}
\end{align}
as claimed.

Finally, we must show that $\oddf{p+1}{k}$ is either symmetric or antisymmetric.  First, suppose $\oddf{p}{k}$ is symmetric.  Then
\begin{align}
&\oddf{p+1}{k}(At,Bt^k) \oddf{p+1}{k}(-Bt^k,-At)\nn
&\qquad=\Bigr(\oddf{p}{k}(A\gamma_p t,B(\gamma_p t)^k) \oddf{p}{k}(-A\gamma_pt,-B(\gamma_pt)^k) \oddf{p}{k}(B(\beta_p t)^k,A\beta_p t)\nn
&\qquad\qquad \times \oddf{p}{k}(-B(\beta_p t)^k,-A\beta_p t) \oddf{p}{k}(A\gamma_p t,B(\gamma_p t)^k) \oddf{p}{k}(-A\gamma_p t,-B(\gamma_p t)^k)\Bigr)
\nn&\quad\qquad\times\Bigr(\oddf{p}{k}(-B(\gamma_p t)^k,-A\gamma_p t) \oddf{p}{k}(B(\gamma_p t)^k,A\gamma_p t) \oddf{p}{k}(-A\beta_p t,-B(\beta_p t)^k)\nn
&\qquad\qquad \times \oddf{p}{k}(A\beta_p t,B(\beta_p t)^k) \oddf{p}{k}(-B(\gamma_p t)^k,-A\gamma_p t) \oddf{p}{k}(B(\gamma_p t)^k,A\gamma_p t)\Bigr)=\openone.\label{eq:thmmulti:up+1defexp}
\end{align}
Thus $\oddf{p+1}{k}$ is antisymmetric.  A similar calculation shows that if $\oddf{p}{k}$ is antisymmetric then $\oddf{p+1}{k}$ is symmetric.
\end{proof}

The result of~\thm{multi} shows how to recursively construct the product formula $\oddf{p}{k}$ starting from $\oddf{1}{k}$.  Note that this construction involves terms of the form $\oddf{q}{k}(A\lambda t,B(\lambda t)^k)^{-1}$.  Since $\oddf{q}{k}$ is either symmetric or antisymmetric, its inverse can be represented explicitly using~\defn{symmetry}.

The following corollary shows that symmetrization alone can increase the order of the approximation of $\oddf{p}{1}$ from $O(t^{2p+1})$ to $O(t^{2p+2})$ at the cost of doubling the number of exponentials.

\begin{corollary}\label{cor:multi}
Let $A$ and $B$ be bounded operators, let $\oddf{p}{1}(At,Bt)$ satisfy $\oddf{p}{1}(At,Bt)=e^{[A,B]t^{2}+ O(t^{2p+1})}$, and define
\begin{equation}
  \oddf{p}{1}'(At,Bt)
  \defeq \oddf{p}{1}(At/\sqrt{2},Bt/\sqrt{2}) 
         \oddf{p}{1}(-At/\sqrt{2},-Bt/\sqrt{2}).
\end{equation}
Then
\begin{equation}
  \oddf{p}{1}'(At,Bt)= e^{[A,B]t^2+O(t^{2p+2})}.
\end{equation}
\end{corollary}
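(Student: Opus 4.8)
\section*{Proof proposal for \cor{multi}}

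The plan is to run the same odd-order-cancellation mechanism that appears inside the proof of \thm{multi}, specialized to $k=1$, but carried only one step: here we only need to remove the leading ($t^{2p+1}$) error term, so none of the $\beta_p,\gamma_p$/root-of-$r_p$ machinery is needed. Concretely, write $U(t)\defeq e^{[A,B]t^2}$. Since $\oddf{p}{1}(At,Bt)=e^{[A,B]t^2+O(t^{2p+1})}$ and $U(t)^{-1}=\openone+O(t^2)$ is bounded, the error can be moved outside the exponential, so there is an operator $C(A,B)$ with $\oddf{p}{1}(At,Bt)=U(t)+C(A,B)t^{2p+1}+O(t^{2p+2})$. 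As in the proof of \thm{multi}, every monomial in the Taylor series of $\oddf{p}{1}(At,Bt)$ is a product of factors $At$ and $Bt$, so the coefficient of $t^{2p+1}$ is a sum of terms each containing an odd total number of $A$ and $B$ operators; hence $C(-A,-B)=-C(A,B)$.

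Next I would substitute $A\mapsto -A$, $B\mapsto -B$ (noting $[-A,-B]=[A,B]$, so $U(t)$ is unchanged) to get $\oddf{p}{1}(-At,-Bt)=U(t)-C(A,B)t^{2p+1}+O(t^{2p+2})$, and then multiply the two expansions with $t$ replaced by $t/\sqrt2$ throughout, exactly as in the step leading to \eq{thmmulti:Upexpand2}. The two $\pm C(A,B)(t/\sqrt2)^{2p+1}$ contributions combine into the commutator $[C(A,B),U(t/\sqrt2)](t/\sqrt2)^{2p+1}$, and since $U(t/\sqrt2)=\openone+O(t^2)$ this is $O(t^{2p+3})$. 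Therefore $\oddf{p}{1}'(At,Bt)=U(t/\sqrt2)^2+O(t^{2p+2})$. Because $U$ is the exponential of something quadratic in its argument, $U(t/\sqrt2)^2=e^{2[A,B](t/\sqrt2)^2}=e^{[A,B]t^2}=U(t)$, so $\oddf{p}{1}'(At,Bt)=U(t)+O(t^{2p+2})$. Finally I would move the error back inside the exponential (again using that $U(t)^{-1}=\openone+O(t^2)$ and $\log(\openone+O(t^{2p+2}))=O(t^{2p+2})$) to conclude $\oddf{p}{1}'(At,Bt)=e^{[A,B]t^2+O(t^{2p+2})}$, as claimed.

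There is essentially no hard step here; the proof is a one-iteration instance of the argument already established in \thm{multi}. The only points requiring a little care are the parity bookkeeping — namely that the $t^{2p+1}$ coefficient $C(A,B)$ is odd under $(A,B)\mapsto(-A,-B)$ and therefore cancels in the product, whereas the $t^{2p+2}$ coefficient need not cancel — and the routine passages of the error in and out of the exponential. The mild conceptual observation worth stating is precisely that we do \emph{not} need to kill the order-$t^{2p+2}$ term (it is absorbed into the $O(t^{2p+2})$ remainder), which is why the plain symmetrization $U(Z)U(-Z)$ with $Z=(At/\sqrt2,Bt/\sqrt2)$ already suffices and no analogue of the $r_p$ computation of \thm{multi} arises.
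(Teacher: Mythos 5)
Your proof is correct and follows essentially the same route as the paper's: the paper likewise reduces the claim to the expansion \eq{thmmulti:Upexpand2} specialized to $k=1$, using the parity argument that the order-$t^{2p+1}$ coefficients of $\oddf{p}{1}(At,Bt)$ and $\oddf{p}{1}(-At,-Bt)$ are equal and opposite and hence cancel in the symmetrized product. Your explicit tracking of the surviving term as a commutator with $U(t/\sqrt{2})$ merely spells out in more detail the cancellation the paper states directly.
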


\begin{proof}
This is a simple consequence of \eq{thmmulti:Upexpand2} with $k=1$.
Since every term of order $2p+1$ in $\oddf{p}{1}(At,Bt)$ must contain an odd total number of $A$s and $B$s, the terms of order $2p+1$ in $\oddf{p}{1}(At,Bt)$ and $\oddf{p}{1}(-At,-Bt)$ are equal and opposite.  By Taylor expanding $\oddf{p}{1}(At/\sqrt{2},Bt/\sqrt{2}) \oddf{p}{1}(-At/\sqrt{2},-Bt/\sqrt{2})$, it is easy to see that the error terms of order $2p+1$ cancel.  The error in $\oddf{p}{1}'(At,Bt)$ is therefore $O(t^{2p+2})$ as claimed.
\end{proof}

\begin{figure}[t!]
\capstart
\includegraphics{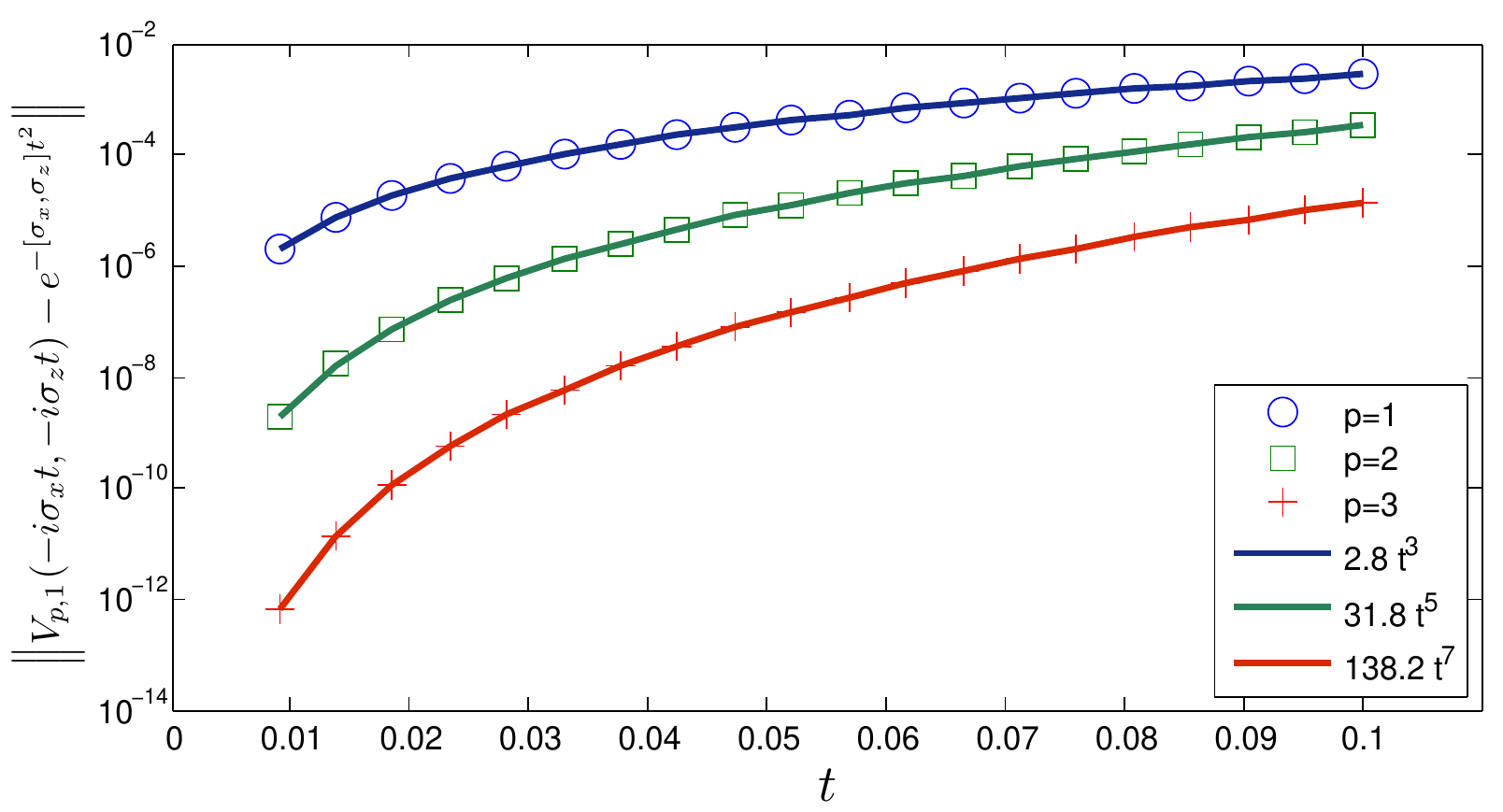}
\caption{Error scaling of the formulas $\oddf{p}{1}(-i\sigma_xt,-i\sigma_zt)$ for $p=1,2,3$ as approximations of $e^{-[\sigma_x,\sigma_z]t^2}$, where $\norm{\cdot}$ is the $2$-norm.
\label{fig:odd_k}}
\end{figure}

Note that a similar symmetrization would improve the error bound of $\oddf{p}{k}$ from $O(t^{2p+k})$ to $O(t^{2p+k+1})$, but this gives no improvement over \thm{multi} for $k>1$ since that theorem already shows that the error is $O(t^{2p+k+1})$.

For $k>1$, we can apply \thm{multi} recursively to produce a formula $\oddf{p}{k}$ with $4 \times 6^{p-1}$ exponentials having error $O(t^{2p+k+1})$.  For $k=1$, applying \thm{multi} recursively followed by one application of \cor{multi} gives a formula $\oddf{p}{1}'$ with $8 \times 6^{p-1}$ exponentials having error $O(t^{2p+2})$.

\fig{odd_k} presents a numerical example showing improved error scaling as $p$ increases.  This example considers simulating the commutator of the Pauli operators $\sigma_x$ and $\sigma_z$ using $k=1$.  These data suggest that our upper bound on the error scaling is in fact tight.

\subsection{Even-$k$ Formulas}\label{sec:evenk}

If $k$ is even then the above approach does not apply, so different reasoning must be used to generate high-order approximations to $e^{[A,B]t^k}$. We show, somewhat surprisingly, that Suzuki's recursive approximation-building formula \cite{suz91} for the exponential of a sum can be used to generate arbitrarily high-order approximation formulas for commutators from a relatively simple initial approximation. The initial formula is as follows.

\begin{lemma}\label{lem:evenk}
Let $A$ and $B$ be bounded operators, let $\xi_k \defeq 2^{-1/(1+k)}$, let $k>0$ be even, and define
\begin{equation}\label{eq:everecur}
\evenf{1}{k}(At,Bt^k) \defeq e^{At\xi_k}e^{Bt^k\xi_k^k}e^{-2At\xi_k}e^{-Bt^{k}\xi_k^k}e^{At\xi_k}.
\end{equation}
Then
\begin{equation}
e^{[A,B]t^{k+1}}=\evenf{1}{k}(At,Bt^k)+O(t^{k+3}).\label{eq:lemevenk:U1expansion}
\end{equation}
\end{lemma}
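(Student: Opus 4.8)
The plan is to recognize $\evenf{1}{k}(At,Bt^k)$ as a product of two similarity transforms and then apply the BCH formula once more. Write $X\defeq At\xi_k$ and $Y\defeq Bt^k\xi_k^k$; since the two adjacent copies of $e^{-X}$ combine, one has the exact identity
\begin{equation*}
\evenf{1}{k}(At,Bt^k)=e^{X}e^{Y}e^{-2X}e^{-Y}e^{X}=\bigl(e^{X}e^{Y}e^{-X}\bigr)\bigl(e^{-X}e^{-Y}e^{X}\bigr)=\exp\!\bigl(e^{\operatorname{ad}_X}Y\bigr)\exp\!\bigl(-e^{-\operatorname{ad}_X}Y\bigr).
\end{equation*}
Applying BCH to this last product, the part of the exponent that is linear in the two arguments is $\bigl(e^{\operatorname{ad}_X}-e^{-\operatorname{ad}_X}\bigr)Y=2\sinh(\operatorname{ad}_X)Y=2[X,Y]+\tfrac13[X,[X,[X,Y]]]+\cdots$, so the terms linear in $Y$ and the $O(\operatorname{ad}_X^{2})$ correction cancel between the two factors; the leading term is $2[X,Y]=2\xi_k^{k+1}[A,B]t^{k+1}$, which equals $[A,B]t^{k+1}$ precisely because $\xi_k=2^{-1/(k+1)}$.

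Next I would do the power counting in $t$. Since $X=O(t)$ and $Y=O(t^{k})$, the next surviving term of $2\sinh(\operatorname{ad}_X)Y$ is $O(t^{k+3})$; every higher BCH bracket contains at least two factors of $Y$ and, because $[Y,Y]=0$, at least one factor of $X$, hence is $O(t^{2k+1})$. The hypothesis that $k$ is even (so $k\ge 2$) gives $2k+1\ge k+3$, so all of these are absorbed into the error, and in particular there is no surviving term of order $t^{k+2}$. Therefore $\log\evenf{1}{k}(At,Bt^k)=[A,B]t^{k+1}+O(t^{k+3})$, and since $[A,B]t^{k+1}$ is a bounded operator this exponentiates to $e^{[A,B]t^{k+1}}+O(t^{k+3})$, which is \eq{lemevenk:U1expansion}.

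The cancellation at order $t^{k+2}$ can also be seen structurally: reversing the order of the five exponentials in $\evenf{1}{k}(At,Bt^k)$ and negating each exponent gives $\evenf{1}{k}(At,Bt^k)^{-1}=\evenf{1}{k}(-At,Bt^k)$, so $\log\evenf{1}{k}$ is odd in $A$; hence every nested-commutator term in the exponent contains an odd number of $A$-factors, the pure-$A$ brackets vanish, a term with one $A$ and $m\ge2$ $B$-factors has order $\ge t^{1+2k}$, and a term with at least three $A$-factors has order $\ge t^{3+k}$, so $[A,B]t^{k+1}$ is the only term of order below $t^{k+3}$. The one point needing care is exactly this estimate: the claimed error $O(t^{k+3})$ is stronger than the $O(t^{k+2})$ of \lem{u1}, and the improvement rests on both the reflection symmetry of the five-term product and on $k\ge 2$, so the borderline case $k=2$---where the higher BCH brackets land exactly at order $t^{k+3}$---should be checked explicitly.
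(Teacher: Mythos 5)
Your proof is correct, and it takes a genuinely different route from the paper's. The paper obtains the five-term formula as the product $\oddf{1}{k}(At\xi_k,Bt^k\xi_k^k)\,\oddf{1}{k}(Bt^k\xi_k^k,-At\xi_k)$ of two group commutators from \lem{u1} (the adjacent $e^{\mp Bt^k\xi_k^k}$ factors cancel), which immediately gives error $O(t^{k+2})$; it then removes the $t^{k+2}$ term using exactly the reflection symmetry $\evenf{1}{k}(At,Bt^k)=\evenf{1}{k}(-At,Bt^{k})^{-1}$ that you mention at the end, combined with the observation that any $t^{k+2}$ coefficient must contain an even number of $A$'s when $k$ is even. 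Your main argument instead applies BCH once to the exact factorization $\exp(e^{\operatorname{ad}_X}Y)\exp(-e^{-\operatorname{ad}_X}Y)$, so the oddness of $\sinh$ kills the would-be $\operatorname{ad}_X^2Y$ term directly, and the power counting ($2k+1\ge k+3$ for even $k$, with $[Y,Y]=0$ forcing an extra factor of $X$ into every two-$Y$ bracket) disposes of the remaining BCH terms. This is more self-contained---it needs neither \lem{u1} nor the inverse-symmetry trick---and it makes visible where the first surviving correction, $\tfrac13\xi_k^{k+3}[A,[A,[A,B]]]t^{k+3}$ from the $\sinh$ series, comes from; the paper's route has the advantage of reusing machinery (the group commutator and the parity-of-$A$-count argument) that recurs in \thm{multi} and \thm{evenk}. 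Your closing worry about $k=2$ is unnecessary: in that case the two-$Y$ brackets land at order $t^{2k+1}=t^{k+3}$, which is precisely the order already absorbed into the stated error, so no separate check is required.
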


\begin{proof}
Since $U(t) \defeq e^{[A,B]t^{k+1}}=e^{[B,-A]t^{k+1}}$, it follows from \lem{u1} (which applies for any $k$) that
\begin{align}
e^{[A,B]t^{k+1}}&=\oddf{1}{k}(At\xi_k,Bt^k\xi_k^k)\oddf{1}{k}(Bt^k\xi_k^k,-At\xi_k) + O(t^{k+2})\nn
&=\left(e^{At\xi_k}e^{Bt^k\xi_k^k}e^{-At\xi_k}e^{-Bt^k\xi_k^k}\right)\left( e^{Bt^k\xi_k^k}e^{-At\xi_k}e^{-Bt^k\xi_k^k}e^{At\xi_k}\right)+O(t^{k+2}).\label{eq:lemevenk:U0exp}
\end{align}
The dominant term of equation~\eq{lemevenk:U0exp} reduces to $\evenf{1}{k}(At,Bt^k)$ by simplifying the middle terms.

It remains to show that the term proportional to $t^{k+2}$ vanishes.  This follows from the fact that $\evenf{1}{k}(At,Bt^k)=\evenf{1}{k}(-At,Bt^{k})^{-1}$.  Thus, defining $C(A,B)$ via $\evenf{1}{k}(At,Bt^k) = U(t)+C(A,B)t^{k+2} + O(t^{k+3})$, we have
\begin{equation}
\openone 
= \evenf{1}{k}(At,Bt^k) \evenf{1}{k}(-At,Bt^k)
= \openone+(U(t)C(-A,B)+C(A,B)U(-t))t^{k+2}+O(t^{k+3}).
\label{eq:lemevenk:contradiction}
\end{equation}
Because $k$ is even, $C$ must be composed of an even number of $A$s, so $C(-A,B)=C(A,B)$.  Since~\eq{lemevenk:contradiction} must hold for arbitrary $t$, we have $C(A,B)=0$.
\end{proof}

The product formula in \lem{evenk} is reminiscent of the Strang splitting formula, and indeed reduces to that approximation to $\openone=e^{(A-A)t}$ when  $B=\openone$, or if we take $B=\mathcal{B}^k$ for some operator $\mathcal{B}$ and substitute $k=0$ into the resulting formulas. Since Suzuki's iterative approximation-building method can refine the Strang splitting into arbitrarily high-order formulas, one might suppose that the same recursion could approximate exponentials of commutators. We make this intuition precise in the following theorem.

\begin{theorem}\label{thm:evenk}
Let $A$ and $B$ be bounded operators, let $k > 0$ be an even integer, let $p\ge 1$ be an integer.  Suppose that $\evenf{p}{k}(At,Bt^k)=\evenf{p}{k}(-At,Bt^k)^{-1}$ and $\evenf{p}{k}(At,Bt^k)=e^{[A,B]t^{k+1}+O(t^{2p+k+1})}$, and define
\begin{equation}
  \evenf{p+1}{k}(At,Bt^k)\defeq \evenf{p}{k}(A\nu_p t,B(\nu_p t)^k)^2 \evenf{p}{k}(-A \mu_p t,B(\mu_p t)^k) \evenf{p}{k}(A\nu_p t,B(\nu_p t)^k)^2\label{eq:thmevenk:upplus1def}
\end{equation}
where
\begin{equation}
  \mu_p \defeq (4s_p)^{1/(k+1)}, \qquad
  \nu_p \defeq (1/4+s_p)^{1/(k+1)}, \qquad 
  s_p \defeq \frac{4^{\frac{k+1}{2p+k+1}}}{4\left(4-4^{\frac{k+1}{2p+k+1}}\right)}.
\label{eq:thmevenk:coeffs}
\end{equation}
Then
\begin{equation}
\evenf{p+1}{k}(At,Bt^k) =\exp\left({[A,B]t^{k+1}+O(t^{2p+k+3})}\right).
\end{equation}
\end{theorem}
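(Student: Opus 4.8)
The plan is to carry out the analysis of Suzuki's recursive construction~\cite{suz91}, adapted to the two features that distinguish the commutator setting: the ``time'' of $U(t)\defeq e^{[A,B]t^{k+1}}$ scales as $U(\lambda t)=e^{[A,B]\lambda^{k+1}t^{k+1}}$, which is why the coefficients in \eq{thmevenk:coeffs} carry $1/(k+1)$ powers; and the governing symmetry is $\evenf{p}{k}(At,Bt^k)=\evenf{p}{k}(-At,Bt^k)^{-1}$ rather than a swap of the two arguments. Since $k$ is even, replacing $t$ by $-t$ in $\evenf{p}{k}(At,Bt^k)$ has exactly the same effect as replacing $A$ by $-A$, so I would treat $W(s)\defeq\evenf{p}{k}(As,Bs^k)$ as an honest function of a single scalar $s$. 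The hypotheses then read $W(s)=W(-s)^{-1}$ and $W(s)=e^{[A,B]s^{k+1}+E(s)}$, where (the operators being bounded, the logarithm is well defined near $s=0$) $E(s)=\sum_{j\ge 2p+k+1}E_j s^j$ for some bounded-operator coefficients $E_j$, and the three factors in \eq{thmevenk:upplus1def} are $W(\nu_p t)^2$, $W(-\mu_p t)$, and $W(\nu_p t)^2$.

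The first substantive step---the one responsible for the two-order jump---is to show $E_{2p+k+2}=0$. Substituting $s\mapsto -s$ and using $(-s)^{k+1}=-s^{k+1}$ (valid because $k$ is even) gives $W(-s)=e^{-[A,B]s^{k+1}+E(-s)}$, whereas $W(-s)=W(s)^{-1}=e^{-[A,B]s^{k+1}-E(s)}$. Matching the (unique, for small $s$) exponents forces $E(-s)=-E(s)$, i.e.\ $E_j=0$ for every even $j$; in particular $E(s)=E_{2p+k+1}s^{2p+k+1}+O(s^{2p+k+3})$, with the order-$(2p+k+2)$ term absent. This is the same mechanism used to kill the $t^{k+2}$ term in the proof of \lem{evenk}.

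Next I would expand the product. Since $(e^X)^2=e^{2X}$ identically, the outer factors contribute $e^{X_1}$ with $X_1=2[A,B]\nu_p^{k+1}t^{k+1}+2E_{2p+k+1}\nu_p^{2p+k+1}t^{2p+k+1}+O(t^{2p+k+3})$, and the middle factor $W(-\mu_p t)$ contributes $e^{X_2}$ with $X_2=-[A,B]\mu_p^{k+1}t^{k+1}-E_{2p+k+1}\mu_p^{2p+k+1}t^{2p+k+1}+O(t^{2p+k+3})$ (using $[-A,B]=-[A,B]$ and $E(-\mu_p t)=-E(\mu_p t)$, both consequences of the previous paragraph). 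The BCH formula gives $\evenf{p+1}{k}(At,Bt^k)=e^{X_1}e^{X_2}e^{X_1}=e^{S}$ with $S=2X_1+X_2+(\text{iterated commutators of }X_1,X_2)$. Because the leading, $O(t^{k+1})$, parts of both $X_1$ and $X_2$ are scalar multiples of $[A,B]$, they commute, so every commutator correction carries at least one error factor and is thus $O(t^{k+1}\cdot t^{2p+k+1})=O(t^{2p+2k+2})\subseteq O(t^{2p+k+3})$ (since $k\ge 2$). Hence $S=(4\nu_p^{k+1}-\mu_p^{k+1})[A,B]t^{k+1}+(4\nu_p^{2p+k+1}-\mu_p^{2p+k+1})E_{2p+k+1}t^{2p+k+1}+O(t^{2p+k+3})$, where absence of an order-$(2p+k+2)$ contribution relies on $E_{2p+k+2}=0$. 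It remains to verify the two scalar identities from \eq{thmevenk:coeffs}: writing $q\defeq 4^{(k+1)/(2p+k+1)}\in(1,4)$, so that $q^{(2p+k+1)/(k+1)}=4$, one computes $\mu_p^{k+1}=4s_p=q/(4-q)$ and $\nu_p^{k+1}=\tfrac14+s_p=1/(4-q)$, whence $4\nu_p^{k+1}-\mu_p^{k+1}=1$ and $4\nu_p^{2p+k+1}-\mu_p^{2p+k+1}=(4-q)^{-(2p+k+1)/(k+1)}\bigl(4-q^{(2p+k+1)/(k+1)}\bigr)=0$. Therefore $S=[A,B]t^{k+1}+O(t^{2p+k+3})$, which is the claim.

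I expect the main obstacle to be recognizing (in the second paragraph) that $E_{2p+k+2}=0$ is not merely convenient but essential: were an order-$(2p+k+2)$ error term present, its scalar prefactor in $S$ would be $4\nu_p^{2p+k+2}-\mu_p^{2p+k+2}$, which is nonzero because $q^{(2p+k+2)/(k+1)}=4q^{1/(k+1)}\ne 4$, so the recursion would gain only one order rather than two. The rest is routine bookkeeping of powers of $t$. For the recursion to close one also notes that the palindromic form of \eq{thmevenk:upplus1def} together with $W(s)=W(-s)^{-1}$ yields $\evenf{p+1}{k}(At,Bt^k)=\evenf{p+1}{k}(-At,Bt^k)^{-1}$, reproducing the theorem's hypotheses at level $p+1$.
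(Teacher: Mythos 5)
Your proof is correct and follows essentially the same route as the paper's: the Suzuki-style recursion in which the choice of $s_p$ cancels the order-$t^{2p+k+1}$ error and the inversion symmetry under $A\to-A$ (equivalently $t\to-t$, since $k$ is even) cancels the order-$t^{2p+k+2}$ error. The only, harmless, difference is that you derive the parity of the error coefficients directly from the functional identity $W(s)=W(-s)^{-1}$ and kill the even-order term already at the level of $\evenf{p}{k}$, whereas the paper counts the number of $A$ factors appearing in each error term and removes the order-$t^{2p+k+2}$ term from the assembled formula $\evenf{p+1}{k}$ after verifying that it inherits the same symmetry.
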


\begin{proof}
We follow the same reasoning used to analyze the Suzuki formulas.  We have
\begin{equation}
  \evenf{p}{k}(At,Bt^k) = U(t)+C(A,B)t^{k+2p+1}+O(t^{2p+k+2})  \label{eq:thmevenk:updef}
\end{equation}
for some $C(A,B)$. Substituting \eq{thmevenk:updef} into \eq{thmevenk:upplus1def} gives
\begin{align}
\evenf{p+1}{k}(At,Bt^k)=U(t)+ &\left(4C(A,B)\nu_p^{2p+k+1}+C(-A,B)\mu_p^{2p+k+1}\right)t^{2p+k+1}+O(t^{2p+k+2}).
\label{eq:thmevenk:eqn}
\end{align}
Each term comprising $C(A,B)$ must contain an odd number of $A$s, because each $A$ is associated with $t$ and each $B$ is associated with $t^k$, so since $k$ is even, a term proportional to $t^{2p+k+1}$ can only be formed from an odd number of $A$s.  Thus $C(A,B)=-C(-A,B)$.  Therefore, the coefficient of $C(A,B)$ is zero when
\begin{equation}
4(1/4+s_p)^{(2p+k+1)/(k+1)}-(4s_p)^{(2p+k+1)/(k+1)}=0.\label{eq:thmevenk:polyeq}
\end{equation}
The value of $s_p$ specified in \eq{thmevenk:coeffs} is a root of \eq{thmevenk:polyeq} and hence
\begin{equation}
\norm{U(t)-\evenf{p+1}{k}(At,Bt^k)} \in O(t^{2p+k+2}).\label{eq:thmevenk:badscale}
\end{equation}

Finally, we show that by symmetry properties of $\evenf{p+1}{k}$, the actual error scaling is better than in \eq{thmevenk:badscale}. We have
\begin{equation}
  \evenf{p+1}{k}(At,Bt^k)=U(t)+D(A,B)t^{2p+k+2}+O(t^{2p+k+3}).
\end{equation}
Since $\evenf{p}{k}(At,Bt^k)=\evenf{p}{k}(-At,Bt^k)^{-1}$, it is easy to see by multiplication and~\eq{thmevenk:upplus1def} that $\evenf{p+1}{k}(At,Bt^k)=\evenf{p+1}{k}(-At,Bt^k)^{-1}$. We conclude that $D(A,B)=0$ by the same argument used in \eq{lemevenk:contradiction}. Since $D(A,B)=0$, we have $\evenf{p+1}{k}(At,Bt^k)=U(t)+O(t^{2(p+1)+k+1})$ as claimed.
\end{proof}

\begin{figure}[t!]
\capstart
\includegraphics{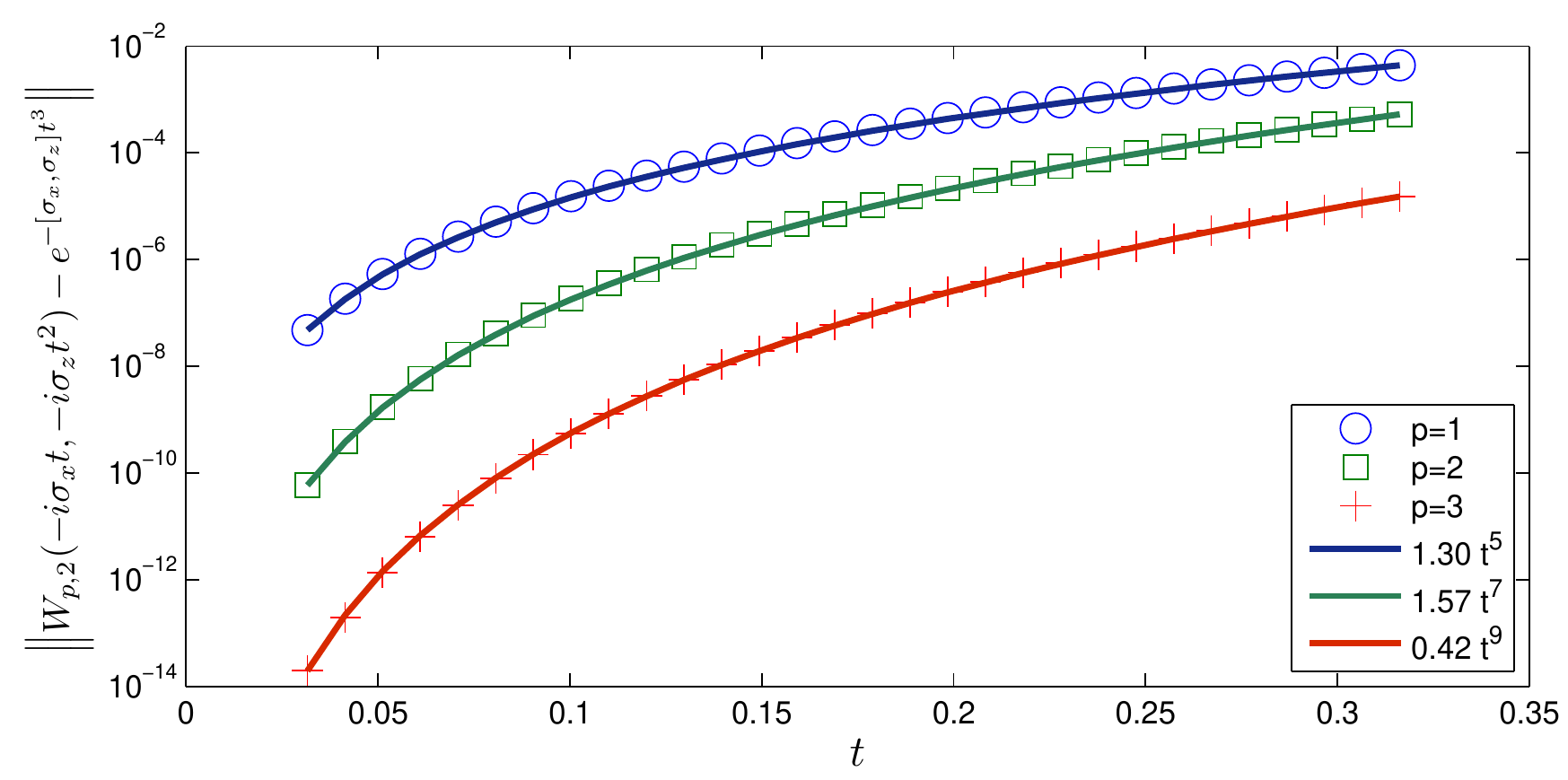}
\caption{Error scaling of the formulas $\evenf{p}{2}(-i\sigma_xt,-i\sigma_zt^2)$ for $p=1,2,3$ as approximations of $e^{-[\sigma_x,\sigma_z]t^3}$.
\label{fig:even_k}}
\end{figure}

The output from the recursive formula in \thm{evenk} can be used as input, so the theorem shows how to refine $\evenf{1}{k}$ into an arbitrarily high-order approximation. The basic formula $\evenf{1}{k}$ has $5$ exponentials, and each iteration increases the number of exponentials by a factor of $5$, so $\evenf{p}{k}(At,Bt^k)$ consists of $5^p$ exponentials.

\fig{even_k} presents a numerical example showing improved error scaling as $p$ increases with $k=2$.  As in the case of odd $k$, these data suggest that our upper bound on the error is tight.

\section{Exponentials of Nested Commutators}\label{sec:nested}

We now construct arbitrarily high-order approximations to exponentials of nested commutators of the form
\begin{equation}
  Z_k
  \defeq [A_k,[A_{k-1},[\ldots,[A_1,A_0]\ldots]]]
  \label{eq:nestedcom}
\end{equation}
(i.e., $Z_0 = A_0$ and $Z_k = [A_k,Z_{k-1}]$ for $k > 0$).
\sec{ournest} presents a construction based on the formulas of \sec{expcomm}, alternating between formulas for odd and even $k$.  \sec{genericnest} presents a method for building a high-order formula from a generic initial formula.

\subsection{Nested Commutators Using the Basic Formulas of \sec{expcomm}}\label{sec:ournest}

Our strategy for approximating such exponentials is simple.  For example, suppose $k$ is odd.  We first approximate $e^{Z_kt^{k+1}}$ with $\oddf{p}{k}(A_k t, Z_{k-1}t^k)$.  The resulting expression still contains commutator exponentials because of the presence of exponentials of $Z_{k-1}$.  We approximate each exponential of the form $e^{Z_{k-1}(\lambda t)^k}$ by $\evenf{p}{k-1}(A_{k-1}\lambda t,Z_{k-2}(\lambda t)^{k-1})$, leaving exponentials of $Z_{k-2}$ in the formula.  This process is repeated recursively until no exponentials of commutators remain.  The case where $k$ is even can be addressed similarly.  In both cases, the resulting product formula is denoted $\nestf{p}{k}(A_kt,\ldots,A_0t)$.  In other words, we approximate a single commutator with $\nestf{p}{1}(A_1 t, A_0 t) \defeq \oddf{p}{1}'(A_1 t, A_0 t)$, and we recursively define $\nestf{p}{q}$ for $q>1$ as
\begin{align}
  \nestf{p}{q}(A_q t, \ldots, A_0 t) \defeq
  \Rep\left(
  \left.\begin{cases}
    \evenf{p}{q}(A_q t, Z_{q-1} t^q)
    & \text{$q$ even} \\
    \oddf{p}{q}(A_q t, Z_{q-1} t^q)
    & \text{$q$ odd}
  \end{cases}\right\}
  ,\, e^{Z_{q-1} (\lambda t)^q} \to \nestf{p}{q-1}(A_{q-1} \lambda t,\ldots, A_0 \lambda t)\right)
\end{align}
for all $q > 1$, where $\Rep(x,\, a \to b)$ indicates that we replace every instance of $a$ in $x$ with $b$.  For example,
\begin{align}
\nestf{1}{2}(A_2t,A_1t,A_0t)
&=\Rep(\evenf{1}{2}(A_2 t, Z_1 t^2),\, e^{Z_1 (\lambda t)^2} \to \nestf{1}{1}(A_1 \lambda t, A_0 \lambda t)) \nn
&=\Rep(e^{A_2\xi_1 t}e^{Z_1 (\xi_1 t)^2}e^{2 A_2\xi_1 t}e^{- Z_1 (\xi_1 t)^2}e^{A_2\xi_1 t},\, e^{Z_1 (\lambda t)^2} \to \nestf{1}{1}(A_1 \lambda t, A_0 \lambda t))\nn
&=e^{A_2\xi_1 t}\nestf{1}{1}( A_{1} \xi_1t, A_0 \xi_1t)e^{2A_2\xi_1 t}\nestf{1}{1}(A_{1}\xi_1 t,A_0 \xi_1 t)^{-1}e^{A_2\xi_1 t}.
\end{align}

At first glance, it may seem surprising that a recursive approximation to $e^{Z_k t^{k+1}}$ can use approximations with error of order less than $2p+k+1$ (rather than using approximations with error $O(t^{2p+k+1})$ at each stage in the recursion) when the desired overall error is $O(t^{2p+k+1})$.  However, since a term with relatively large error is multiplied by a relatively small term, this recursive process straightforwardly yields formulas with error $O(t^{2p+k+1})$. 

\begin{lemma}\label{lem:nested}
For any positive integer $k$, let $A_0,\ldots,A_k$ be bounded operators, define $Z_k$ by equation \eq{nestedcom}, and let $\nestf{p}{k}(A_k t, \ldots, A_0 t)$ be the product formula defined above. Then $\nestf{p}{k}(A_k t, \ldots, A_0 t)=e^{Z_k t^{k+1}+O(t^{2p+k+1})}$.
\end{lemma}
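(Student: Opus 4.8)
The plan is to induct on the nesting depth $k$. The base case $k=1$ is a direct consequence of \cor{multi}: by definition $\nestf{p}{1}(A_1 t, A_0 t)=\oddf{p}{1}'(A_1 t, A_0 t)$, and \cor{multi} gives $\oddf{p}{1}'(A_1 t, A_0 t)=e^{[A_1,A_0]t^2+O(t^{2p+2})}=e^{Z_1 t^{2}+O(t^{2p+2})}$, which is exactly the asserted bound at $k=1$.

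For the inductive step I would fix $k\ge 2$ and assume the statement for $k-1$; this is the $k-1$ case of the lemma, so the exponent there is $Z_{k-1}(\lambda t)^k$ and the error is $O(t^{2p+(k-1)+1})=O(t^{2p+k})$, i.e.\ $\nestf{p}{k-1}(A_{k-1}\lambda t,\dots,A_0\lambda t)=e^{Z_{k-1}(\lambda t)^{k}+O(t^{2p+k})}$ for every real $\lambda$ and all bounded $A_0,\dots,A_{k-1}$. I would treat $k$ odd, the even case being identical with $\evenf{p}{k}$ and \thm{evenk} in place of $\oddf{p}{k}$ and \thm{multi}. As recorded in \sec{oddk}, iterating \thm{multi} from \lem{u1} produces a formula with $\oddf{p}{k}(A_k t, Wt^k)=e^{[A_k,W]t^{k+1}+O(t^{2p+k+1})}$ for all bounded $W$; taking $W=Z_{k-1}$ gives $\oddf{p}{k}(A_k t, Z_{k-1}t^k)=e^{Z_k t^{k+1}+O(t^{2p+k+1})}$. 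Since $\nestf{p}{k}(A_k t,\dots,A_0 t)$ is obtained from $\oddf{p}{k}(A_k t, Z_{k-1}t^k)$ by replacing each factor $e^{\pm Z_{k-1}(\lambda t)^k}$ with $\nestf{p}{k-1}(A_{k-1}\lambda t,\dots,A_0\lambda t)^{\pm1}$ (inverses written out via \defn{symmetry}), it then remains only to show that performing all of these replacements alters the operator by $O(t^{2p+k+1})$ in norm.

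That last estimate is the heart of the matter, and I expect it to be the main obstacle, because the inductive hypothesis controls each replaced factor only to order $t^{2p+k}$, so a naive factor-by-factor bound would permit aggregate error $O(t^{2p+k})$ — one power of $t$ too large. The way around this is that $Z_{k-1}$ enters $\oddf{p}{k}(A_k t, Z_{k-1}t^k)$ only through terms of order $t^{k+1}$ and higher — the order-$t^k$ contributions of the individual $Z_{k-1}$-exponentials cancel, which is precisely why the formula approximates $e^{[A_k,Z_{k-1}]t^{k+1}}$ — so perturbing those exponentials by $O(t^{2p+k})$ in the exponent (roughly, an $O(t^{2p})$ perturbation of $Z_{k-1}$) can only affect the output at order $t^{2p}\cdot t^{k+1}=t^{2p+k+1}$. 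Concretely I would make this rigorous via the \emph{balanced-pair} structure of the recursions of \sec{expcomm}: the $Z_{k-1}$-exponentials occur in pairs $\big(e^{+Z_{k-1}(\lambda t)^k},\,e^{-Z_{k-1}(\lambda t)^k}\big)$, one pair per innermost copy of $\oddf{1}{k}$ (resp.\ $\evenf{1}{k}$), whose two members are separated by only a single exponential of $A_k$, hence by an operator that is $\openone+O(t)$. Writing $\nestf{p}{k-1}(A_{k-1}\lambda t,\dots)=e^{Z_{k-1}(\lambda t)^k}+E_\lambda$ with $\norm{E_\lambda}=O(t^{2p+k})$, and using $\norm{Z_{k-1}(\lambda t)^k}=O(t^k)$, one gets $\nestf{p}{k-1}(A_{k-1}\lambda t,\dots)^{-1}=e^{-Z_{k-1}(\lambda t)^k}-E_\lambda+O(t^{2p+k+1})$, so replacing one pair together with the $A_k$-exponential between its members perturbs the product by $E_\lambda-E_\lambda+O(t\cdot t^{2p+k})=O(t^{2p+k+1})$: the leading $O(t^{2p+k})$ terms cancel because the pair receives equal and opposite errors and everything between the members differs from $\openone$ only at order $t$. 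Telescoping over the $O(1)$ pairs (each replaced factor is still $\openone+O(t^k)$, so it does not spoil the estimate at later steps) gives $\norm{\nestf{p}{k}(A_k t,\dots,A_0 t)-\oddf{p}{k}(A_k t, Z_{k-1}t^k)}=O(t^{2p+k+1})$; combining with the displayed estimate and using that $e^{Z_k t^{k+1}}$ and its inverse are bounded yields $\nestf{p}{k}(A_k t,\dots,A_0 t)=e^{Z_k t^{k+1}}+O(t^{2p+k+1})=e^{Z_k t^{k+1}+O(t^{2p+k+1})}$, which closes the induction.

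In summary, the routine parts are the base case (\cor{multi}), the reduction of the inductive step to the replacement estimate (\thm{multi}/\thm{evenk}), the representation of inverses via \defn{symmetry}, and the passage from an additive $O(t^{2p+k+1})$ error to the same error in the exponent. The genuine work is the bookkeeping behind the cancellation: verifying that the recursions for $\oddf{p}{k}$, $\oddf{p}{1}'$, and $\evenf{p}{k}$ all preserve the balanced-pair structure with the members of each pair adjacent up to one $A_k$-exponential, and then controlling every higher-order cross term in the pair-replacement estimate.
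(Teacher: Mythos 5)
Your proof is correct and follows the same overall route as the paper's: induction on $k$, base case from \cor{multi}, inductive step reduced to \thm{multi} and \thm{evenk}. The one place you genuinely diverge is in how the inductive step's error propagation is justified. The paper disposes of it in a single line by working in the exponent: it writes $\nestf{p}{q}(A_q t,\ldots,A_0 t)=\exp([A_q t,\, Z_{q-1}t^q+O(t^{2p+q})])=\exp(Z_q t^{q+1}+O(t^{2p+q+1}))$, i.e.\ it treats the inductive error as a perturbation of the second argument of the commutator and lets the extra factor of $t$ carried by $A_q$ upgrade $O(t^{2p+q})$ to $O(t^{2p+q+1})$. Your balanced-pair computation is a multiplicative rendering of exactly this cancellation: the reason a perturbation of $Z_{q-1}t^q$ enters only at one order higher is precisely that the $Z_{q-1}$-exponentials occur in $(+,-)$ pairs separated by operators within $O(t)$ of the identity, which is what you verify directly, including the fact that the member replaced by $\nestf{p}{q-1}(\cdots)^{-1}$ inherits the opposite leading error via the expansion of $(e^{X}+E)^{-1}$. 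What your version buys is explicitness at the point where the paper is tersest --- the paper's one-liner implicitly applies \thm{multi} and \thm{evenk} with a $t$-dependent (and $\lambda$-dependent) second argument, which is not literally what those theorems state --- at the cost of the structural bookkeeping you correctly flag as the real work (checking that the recursions of \thm{multi} and \thm{evenk} preserve the pairing, which they do, since every innermost copy of $\oddf{1}{q}$ or $\evenf{1}{q}$ contributes exactly one such pair).
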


\begin{proof}
We use induction on $k$.  By \cor{multi}, $\nestf{p}{1}(A_1 t,A_0 t)=e^{Z_1t^2 + O(t^{2p+2})}$, establishing the base case. For the induction step, \thm{multi} and \thm{evenk} imply that
\begin{align}
  \nestf{p}{q}(A_q t, \ldots, A_0 t)
  &=\exp([A_q t,Z_{q-1} t^q+O(t^{2p+q})]) \nn
  &=\exp(Z_q t^{q+1} + O(t^{2p+q+1})).
\end{align}
The desired result then follows by induction.
\end{proof}

To calculate the number of exponentials $N_{p,k}$ appearing in $\nestf{p}{k}(A_k t,\ldots,A_0 t)$, recall from \sec{expcomm} that $\oddf{p}{1}'$ uses $N_{p,1}=8 \times 6^{p-1}$ exponentials.  If $k>1$ is odd, then $\oddf{p}{k}$ uses $4 \times 6^{p-1}$ exponentials, half of which require further expansion into $N_{p,k-1}$ exponentials.  If $k$ is even, then $\evenf{p}{k}$ uses $5^p$ exponentials, $2/5$ of which require further expansion into $N_{p,k-1}$ exponentials.  Thus, for $k>1$, we have
\begin{align}
  N_{p,k} &= \begin{cases}
  5^{p-1} (3 + 2 N_{p,k-1}) & \text{$k$ even} \\
  2 \times 6^{p-1} (1 + N_{p,k-1}) & \text{$k$ odd}.
\end{cases}\label{eq:Npk}
\end{align}
While it is cumbersome to present the solution in closed form, it is easy to see that for fixed $k$ we have $N_{p,k}=O(6^{pk})$ since we increase the number of exponentials by a factor of $O(6^p)$ with each iteration.  While this cost may be acceptable for small $k$, it could be prohibitive for large $k$.  We discuss an alternative strategy below that may be more favorable for large $k$.

\subsection{Generic Formulas for Nested Commutators}\label{sec:genericnest}

Another recursive construction of exponentials of nested commutators is presented by Jean and Koseleff~\cite{JK97}.  As in the construction of \sec{ournest}, their approach also recursively refines an initial approximation to a nested commutator exponential into a higher-order formula.  Here we present a modification of their construction with better numerical stability.  Unlike the formulas presented in \lem{nested}, each iteration of the approximation-building algorithm increases the approximation order by one, rather than two.  Thus, for consistency with our previous notation, we denote these product formulas $\gen{p}{2}(A_kt,\ldots,A_0t)$ where $p$ is an integer multiple of $1/2$.  This convention allows us to reuse calculations from the proof of~\thm{evenk}.


\begin{lemma}\label{lem:genericnested}
Let $\gen{p}{2}(A_kt,\ldots,A_0t)$ be an invertible product formula approximation to $\exp(Z_kt^{k+1})$ that has error $O(t^{2p+k+1})$.  Define
\begin{equation}
  \gen{p+1/2}{2}(A_k t,\ldots,A_0 t)\defeq \gen{p}{2}(A_k \nu_p t,\ldots,A_0 \nu_p t)^2 \gen{p}{2}(A_k \mu_p t,\ldots,A_0 \mu_p t)^{-1} \gen{p}{2}(A_k \nu_p t,\ldots,A_0 \nu_p t)^2
\label{eq:bgc0}
\end{equation}
where
\begin{equation}
  \mu_p \defeq (4s_p)^{1/(k+1)}, \qquad
  \nu_p \defeq (1/4+s_p)^{1/(k+1)}, \qquad 
  s_p \defeq \frac{4^{\frac{k+1}{2p+k+1}}}{4\left(4-4^{\frac{k+1}{2p+k+1}}\right)}.
\label{eq:bgc}
\end{equation}
Then $\norm{\gen{p+1/2}{2}(A_kt,\ldots,A_0t)-e^{Z_kt^{k+1}}}\in O(t^{2(p+1/2)+k+1})$.
\end{lemma}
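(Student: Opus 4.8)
The plan is to transcribe the first half of the proof of \thm{evenk}, treating $\gen{p}{2}$ as a black box that is invertible and approximates $V(t)\defeq e^{Z_k t^{k+1}}$ to order $O(t^{2p+k+1})$. The argument is in fact shorter than that of \thm{evenk}: because the middle factor in \eq{bgc0} is written explicitly as an inverse (rather than as a sign‑flipped copy), no parity argument on the error term is needed, and the cancellation works for $k$ of either parity.

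First I would record the hypothesis as
\[
  \gen{p}{2}(A_k t,\ldots,A_0 t) = V(t) + C\,t^{2p+k+1} + O(t^{2p+k+2})
\]
for some operator $C=C(A_0,\ldots,A_k)$ (possibly zero). Replacing $t$ by $\lambda t$ gives $\gen{p}{2}(A_k \lambda t,\ldots,A_0 \lambda t) = V(\lambda t) + C\,(\lambda t)^{2p+k+1} + O(t^{2p+k+2})$, and from $\openone = \gen{p}{2}(A_k\mu_p t,\ldots)\,\gen{p}{2}(A_k\mu_p t,\ldots)^{-1}$ together with $V(\mu_p t)=\openone+O(t^{k+1})$ one obtains $\gen{p}{2}(A_k\mu_p t,\ldots,A_0\mu_p t)^{-1} = V(\mu_p t)^{-1} - C\,\mu_p^{2p+k+1}t^{2p+k+1} + O(t^{2p+k+2})$.

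Next I would substitute these expansions into \eq{bgc0} and collect powers of $t$. The leading term is $V(\nu_p t)^2\, V(\mu_p t)^{-1}\, V(\nu_p t)^2$; since every factor is an exponential of a scalar multiple of $Z_k$, these commute, and the product equals $\exp\!\big(Z_k t^{k+1}(4\nu_p^{k+1}-\mu_p^{k+1})\big) = V(t)$, using $\nu_p^{k+1}=1/4+s_p$ and $\mu_p^{k+1}=4s_p$ so that $4\nu_p^{k+1}-\mu_p^{k+1}=1$. For the coefficient of $t^{2p+k+1}$, the key point is that conjugating $C$ by any of the $V(\cdot)$ factors alters it only by $O(t^{k+1})$, hence contributes only at order $t^{2p+2k+2}$ and higher, which is absorbed into $O(t^{2p+k+2})$ since $k\ge 1$. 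Therefore the coefficient of $t^{2p+k+1}$ is $\big(4\nu_p^{2p+k+1}-\mu_p^{2p+k+1}\big)C$: a term $2\nu_p^{2p+k+1}C$ from each of the two squared factors $\gen{p}{2}(A_k\nu_p t,\ldots)^2$ and a term $-\mu_p^{2p+k+1}C$ from the middle inverse.

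Finally I would invoke the computation already carried out in the proof of \thm{evenk}: after substituting $\nu_p^{k+1}=1/4+s_p$ and $\mu_p^{k+1}=4s_p$, the condition $4\nu_p^{2p+k+1}-\mu_p^{2p+k+1}=0$ is precisely \eq{thmevenk:polyeq}, and the value of $s_p$ in \eq{bgc} (which coincides with \eq{thmevenk:coeffs}) is a root of it. Hence the $t^{2p+k+1}$ term vanishes and $\gen{p+1/2}{2}(A_k t,\ldots,A_0 t) = V(t) + O(t^{2p+k+2})$, i.e.\ the error is $O(t^{2(p+1/2)+k+1})$, as claimed. The one place demanding care is the bookkeeping in the previous paragraph: confirming that the non‑commutativity of $C$ with the $V(\cdot)$ factors really is harmless at order $t^{2p+k+1}$, and that the three error contributions combine with exactly the stated coefficients. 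Everything else is a direct reuse of the Suzuki‑style analysis behind \thm{evenk}.
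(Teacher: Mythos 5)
Your proposal is correct and follows essentially the same route as the paper: expand $\gen{p}{2}$ to order $t^{2p+k+1}$, use the identity $\openone=\gen{p}{2}\,\gen{p}{2}^{-1}$ to show the inverse carries error coefficient $-C$, and then reuse the cancellation condition $4\nu_p^{2p+k+1}-\mu_p^{2p+k+1}=0$ from the proof of \thm{evenk}. Your added bookkeeping (verifying $4\nu_p^{k+1}-\mu_p^{k+1}=1$ and that conjugating $C$ by the $V(\cdot)$ factors only perturbs it at order $t^{k+1}$) is sound and in fact slightly more explicit than the paper's appeal to ``the same calculation as in \eq{thmevenk:eqn}.''
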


\begin{proof}
We have
\begin{equation}
  \gen{p+1/2}{2}(A_kt,\ldots,A_0t) = \exp(Z_k t^{k+1}) + C(A_k,\ldots,A_0) t^{2p+k+1} + O(t^{2p+k+2})
\end{equation}
for some operator $C(A_k,\ldots,A_0)$.  Similarly, since $\gen{p+1/2}{2}(A_kt,\ldots,A_0t)\gen{p+1/2}{2}(A_kt,\ldots,A_0t)^{-1}=\openone$, we have
\begin{equation}
  \gen{p+1/2}{2}(A_kt,\ldots,A_0t)^{-1} = \exp(-Z_kt^{k+1}) + \tilde{C}(A_k,\ldots,A_0) t^{2p+k+1} + O(t^{2p+k+2})
\end{equation}
for some operator $\tilde{C}(A_k,\ldots,A_0)$. Since
\begin{align}
  \openone
  &= \gen{p+1/2}{2}(A_kt,\ldots,A_0t) \gen{p+1/2}{2}(A_kt,\ldots,A_0t)^{-1} \nn
  &= \openone + (C(A_k,\ldots,A_0)+\tilde C(A_k,\ldots,A_0))t^{2p+k+1} + O(t^{2p+k+2}),
\end{align}
we have $\tilde C(A_k,\ldots,A_0) = -C(A_k,\ldots,A_0)$.  The result then follows from the same calculation as in \eq{thmevenk:eqn}.
\end{proof}

If $\gen{1/2}{k}(A_kt,\ldots,A_0t)$ contains $N_{1/2,k}$ exponentials then $\gen{p}{k}(A_kt,\ldots,A_0t)$ contains $5^{2p-1}N_{1/2,k}$ exponentials.  In contrast, the construction of \cite{JK97} uses only $3^{2p-1}N_{1/2,k}$ exponentials, but the duration of each exponential grows with $p$ and thus the resulting expressions are less numerically stable than those of~\lem{genericnested}.  Indeed, we will see in \sec{lintime} that the formulas of \cite{JK97} are less efficient overall than those produced by \lem{genericnested}.

As a concrete example, suppose we choose $\gen{1/2}{k}(A_kt,\ldots,A_0 t)$ to be the product formula obtained by $k$ recursive applications of the formula $e^{[A,B]t^2}=e^{At}e^{Bt}e^{-At}e^{-Bt}+O(t^3)$ and refine this into a formula that is accurate to $O(t^{2p+k+1})$ by $2p-1$ applications of \lem{genericnested}.  The resulting formula is as follows.

\begin{definition}\label{def:genprod}
Let $\nestgc{p}{k}(A_kt,\ldots,A_0t)$ be the product formula found by $2p-1$ applications of \lem{genericnested} to the product formula $\nestgc{1/2}{k}(A_kt,\ldots,A_0t)$, defined recursively via
\begin{equation}
\nestgc{1/2}{k}(A_k t,\ldots A_0 t) \defeq e^{A_k t}\nestgc{1/2}{k-1}(A_{k-1} t,\ldots A_0 t)e^{-A_k t}\nestgc{1/2}{k-1}(A_{k-1} t,\ldots A_0 t)^{-1},
\end{equation}
where $\nestgc{1/2}{0}(A_0 t) \defeq e^{A_0 t}$.  
\end{definition}

The formula $\nestgc{1/2}{k}(A_kt,\ldots,A_0t)$ contains $N_{1/2,k}\in O(2^k)$ exponentials and is correct to $O(t^{k+2})$.  Thus \lem{genericnested} constructs a product formula with error $O(t^{2p+k+1})$ using $O(5^{2p}2^k)$ exponentials.  We therefore expect that $\nestf{p}{1}(A_kt,\ldots,A_0t)$ will be more efficient than $\nestgc{p}{k}(A_kt,\ldots,A_0t)$ for $k=1$, and for $k\ge 1$ if $p$ is sufficiently small, although the latter formula will be advantageous if both $k$ and $p$ are large.  We make this comparison rigorous in the following section where we provide upper bounds for the approximation errors incurred by these formulas.

\section{Error Bounds for Nested Commutators}\label{sec:errbd}

So far, we have presented product formulas that approximate the evolution according to a (nested) commutator to arbitrarily high order.  We now provide simple upper bounds on the approximation error that results when using these formulas to simulate evolution for a sufficiently small time.  We use the following notation to denote the remainder of a Taylor series expansion truncated at order $\nu-1$.
\begin{definition}
Let $f(t)$ be a bounded operator for any $t\in \mathbb{R}$ with a Taylor series $f(t)=\sum_{n=0}^\infty a_n t^n$.  Then
\begin{equation}
  \Rem_\nu(f(t)) \defeq \sum_{n=\nu}^\infty a_n t^n.
\end{equation}
\end{definition}

We use the following standard result (with a proof included for completeness).

\begin{lemma}\label{lem:expremainder}
Let $\{A_q \colon q=1,\ldots,N\}$ be a set of bounded operators and let $\norm{\cdot}$ be a submultiplicative norm.  Then for any positive integer $\nu$,
\begin{equation}
\Norm{\Rem_\nu \left(\prod_{q=1}^{N}e^{A_q t}\right)}
\le   \Rem_\nu \left( e^{\sum_{q=1}^{N} \norm{A_q} t}\right).
\end{equation}
\end{lemma}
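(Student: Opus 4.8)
The plan is to expand both sides as explicit power series in $t$ and compare them term by term, pushing norms through the products with submultiplicativity and then recognizing the resulting scalar series as $e^{\sum_q\norm{A_q}t}$.

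First I would write each factor as its everywhere-convergent exponential series $e^{A_q t}=\sum_{j\ge 0}A_q^{j}t^{j}/j!$ (convergent since each $A_q$ is bounded) and multiply the $N$ series together. Because each series converges absolutely in operator norm, the $N$-fold Cauchy product rearranges freely, giving
\[
\prod_{q=1}^{N}e^{A_q t}=\sum_{n=0}^{\infty}c_n t^n,\qquad
c_n=\sum_{\substack{j_1,\ldots,j_N\ge 0\\ j_1+\cdots+j_N=n}}\frac{A_1^{j_1}}{j_1!}\cdots\frac{A_N^{j_N}}{j_N!}.
\]
This simultaneously shows that $\prod_q e^{A_q t}$ does have a Taylor series (so $\Rem_\nu$ is defined for it) and that $\Rem_\nu\big(\prod_q e^{A_q t}\big)=\sum_{n\ge\nu}c_n t^n$. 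Next I would bound $\norm{c_n}$ by the triangle inequality and submultiplicativity, obtaining $\norm{c_n}\le\sum_{j_1+\cdots+j_N=n}\norm{A_1}^{j_1}\cdots\norm{A_N}^{j_N}/(j_1!\cdots j_N!)$, and then invoke the multinomial identity $\sum_{j_1+\cdots+j_N=n}\binom{n}{j_1,\ldots,j_N}x_1^{j_1}\cdots x_N^{j_N}=(x_1+\cdots+x_N)^n$ with $x_q=\norm{A_q}$ to recognize this bound as exactly $b_n:=(\sum_q\norm{A_q})^n/n!$, the $n$-th Taylor coefficient of $\prod_q e^{\norm{A_q}t}=e^{\sum_q\norm{A_q}t}$. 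Summing over $n\ge\nu$ then yields
\[
\Norm{\Rem_\nu\Big(\prod_{q=1}^{N}e^{A_q t}\Big)}\le\sum_{n\ge\nu}\norm{c_n}\,t^n\le\sum_{n\ge\nu}b_n t^n=\Rem_\nu\Big(e^{\sum_{q}\norm{A_q}t}\Big),
\]
which is the claim.

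The step I would watch most carefully is the analytic bookkeeping, of which there are two small pieces: justifying that the $N$-fold Cauchy product of the absolutely convergent operator series equals $\prod_q e^{A_q t}$ with the stated coefficients $c_n$ (a routine induction on $N$ via Mertens' theorem, valid by absolute convergence), and the sign of $t$ — since $\Rem_\nu(e^{\sum_q\norm{A_q}t})$ has nonnegative coefficients $b_n$, the comparison above is cleanest for $t\ge 0$, with the general case following by noting $\norm{\Rem_\nu(\cdot)}\le\sum_{n\ge\nu}\norm{c_n}\,|t|^n$ and that the right-hand side is monotone in $|t|$. Everything else is the elementary multinomial identity.
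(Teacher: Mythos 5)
Your proposal is correct and follows essentially the same route as the paper's proof: expand each exponential as a power series, push the norm through via the triangle inequality and submultiplicativity, and recognize the resulting scalar majorant as the remainder of $e^{\sum_q \norm{A_q} t}$. You simply make explicit (via the Cauchy product and the multinomial identity) the coefficient-wise comparison that the paper's proof carries out implicitly.
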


\begin{proof}
Since each $A_q$ is bounded, Taylor's theorem implies that
\begin{equation}
\Norm{\Rem_\nu \left(\prod_{q=1}^{N}e^{A_q t}\right)}
=\Norm{\Rem_\nu \left(\prod_{q=1}^{N}\sum_{p=0}^\infty \frac{(A_q t)^{p}}{p!}\right)}.
\end{equation}
The submultiplicativity of $\norm{\cdot}$ and the triangle inequality imply that
\begin{align}
\Norm{\Rem_\nu \left(\prod_{q=1}^{N}\sum_{p=0}^\infty \frac{(A_q t)^{p}}{p!}\right)}
&\le \Rem_\nu \left(\prod_{q=1}^{N}\Norm{\sum_{p=0}^\infty \frac{(A_q t)^{p}}{p!}}\right) \nn
&\le\Rem_\nu \left(\prod_{q=1}^{N}\sum_{p=0}^\infty \frac{(\norm{A_q} t)^{p}}{p!}\right) \nn
&=\Rem_\nu \left(e^{\sum_{q=1}^{N}\norm{A_q}t}\right)
\end{align}
as claimed.
\end{proof}

The following lemma establishes general error bounds for exponentials of (possibly nested) commutators.

\begin{lemma}\label{lem:errbd}
Let $\{A_j \colon j=0,\ldots,k\}$ be a set of bounded operators and define $Z_k$ by equation \eq{nestedcom}.  Suppose that for some product formula $\gen{p}{k}(A_kt,\ldots A_0t)$,
\begin{enumerate}[itemsep=0pt]
\item\label{item:order} $\gen{p}{k}(A_k t,\ldots,A_0 t) = e^{Z_j t^{k+1} + O(t^\nu)}$ for some integer $\nu>k+1$,
\item\label{item:nq} $\gen{p}{k}(A_k t,\ldots,A_0 t) = \prod_{q=1}^{N_{p,k}} e^{\lambda_q A_{j_q} t}$ where $N_{p,k}^{-1} \sum_{q=1}^{N_{p,k}} |\lambda_q| \le Q_{p,k}$,
\item\label{item:Lambda} $\Lambda \ge 2\norm{A_j}$ for all $j=0,\ldots,k$,
\item\label{item:smalltime} $\Lambda t \le \frac{\ln 2}{N_{p,k} Q_{p,k}}$, and
\item\label{item:nqbig} $N_{p,k} Q_{p,k}\ge 1$.
\end{enumerate}
Then
\begin{equation}
\Norm{e^{Z_kt^{k+1}}-\gen{p}{k}(A_k t, \ldots, A_0 t)} 
\le \left(\frac{e N_{p,k} Q_{p,k} \Lambda  t}{\nu^{1/(k+1)}} \right)^\nu.
\end{equation}
\end{lemma}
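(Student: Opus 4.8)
The plan is to reduce the error to a sum of two Taylor–series remainders and bound each one. By hypothesis~\ref{item:order}, the power series (in $t$) of $\gen{p}{k}(A_kt,\ldots,A_0t)$ and of $e^{Z_kt^{k+1}}$ agree through order $\nu-1$: writing $e^{Z_kt^{k+1}+O(t^\nu)}-e^{Z_kt^{k+1}}$ and expanding, every surviving term carries at least one factor that is $O(t^\nu)$. Hence
\begin{equation*}
e^{Z_kt^{k+1}}-\gen{p}{k}(A_kt,\ldots,A_0t)=\Rem_\nu\!\bigl(e^{Z_kt^{k+1}}\bigr)-\Rem_\nu\!\bigl(\gen{p}{k}(A_kt,\ldots,A_0t)\bigr),
\end{equation*}
so the triangle inequality reduces the problem to bounding the two remainders separately.

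For the product-formula remainder, hypothesis~\ref{item:nq} writes $\gen{p}{k}$ as $\prod_{q=1}^{N_{p,k}}e^{\lambda_qA_{j_q}t}$, so \lem{expremainder} gives $\Norm{\Rem_\nu(\gen{p}{k})}\le\Rem_\nu\!\bigl(e^{xt}\bigr)$ with $x=\sum_q|\lambda_q|\,\norm{A_{j_q}}$; hypotheses~\ref{item:nq} and~\ref{item:Lambda} bound this by $xt\le\tfrac12 N_{p,k}Q_{p,k}\Lambda t$. For the commutator-exponential remainder, the key observation is that the Taylor series of $e^{Z_kt^{k+1}}$ has nonzero terms only at degrees that are multiples of $k+1$, so $\Rem_\nu(e^{Z_kt^{k+1}})$ begins at degree $(k+1)\lceil\nu/(k+1)\rceil\ge\nu$; combining this with the submultiplicative estimate $\norm{Z_k}\le 2^k\norm{A_k}\cdots\norm{A_0}\le\Lambda^{k+1}/2$ (this is where the factor $2$ in hypothesis~\ref{item:Lambda} is needed, to absorb the $2^k$ from the nesting) yields $\Norm{\Rem_\nu(e^{Z_kt^{k+1}})}\le\sum_{m\ge\lceil\nu/(k+1)\rceil}(\Lambda t)^{(k+1)m}/(2^m m!)$.

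To finish I would use the elementary estimate $\Rem_\nu(e^y)=\sum_{n\ge\nu}y^n/n!\le(y^\nu/\nu!)\,e^y$ together with $\nu!\ge(\nu/e)^\nu$, and then invoke the remaining hypotheses: hypothesis~\ref{item:smalltime} gives $N_{p,k}Q_{p,k}\Lambda t\le\ln2$, which keeps all stray $e^{(\cdot)}$ factors below $\sqrt2$ and lets one replace $(\Lambda t)^{k+1}$ by $\Lambda t$; hypothesis~\ref{item:nqbig} lets one bound $(eN_{p,k}Q_{p,k})^{\nu}$ below by $e^\nu$ when clearing constants; and $\nu>k+1$ forces $\nu\ge3$. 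A short computation then shows each of the two remainders is at most $\tfrac12\bigl(eN_{p,k}Q_{p,k}\Lambda t/\nu^{1/(k+1)}\bigr)^\nu$, and adding them gives the claim. I expect the only real work to be this last constant-chasing step, and in particular making the commutator-exponential remainder fit: its natural per-order size is $(\Lambda t)^{k+1}$ rather than $\Lambda t$, and converting $\bigl((k+1)e(\Lambda t)^{k+1}/2\nu\bigr)^{\lceil\nu/(k+1)\rceil}$ into the form $\bigl(\cdot/\nu^{1/(k+1)}\bigr)^\nu$ uses the multiple-of-$(k+1)$ structure in a way that is tightest (though still valid, e.g.\ because $\nu\ge k+2$ makes the relevant exponent-weighted inequality hold) for small $k$ and $\nu$.
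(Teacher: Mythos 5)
Your proposal is correct and follows essentially the same route as the paper's proof: the same triangle-inequality split into the two Taylor remainders $\Rem_\nu(e^{Z_kt^{k+1}})$ and $\Rem_\nu(\gen{p}{k})$, the same use of \lem{expremainder} for the product-formula term, the same observation that the commutator exponential contributes only at degrees divisible by $k+1$ starting from $(k+1)\lceil\nu/(k+1)\rceil\ge\nu$, and the same Stirling-plus-hypotheses constant chase (the paper's version ends with each remainder bounded by a constant times $\bigl(eN_{p,k}Q_{p,k}\Lambda t/\nu^{1/(k+1)}\bigr)^\nu$ with the constants summing to $4.1/\sqrt{2\pi\nu}<1$). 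Your explicit tracking of the factor $2$ in hypothesis~\ref{item:Lambda} absorbing the $2^k$ from the nested commutator, and of the $(\Lambda t)^{k+1}$-versus-$\Lambda t$ conversion via hypothesis~\ref{item:nqbig}, matches the paper's reasoning exactly.
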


\begin{proof}
By assumption \ref{item:order} and the triangle inequality, we have
\begin{equation}
  \norm{e^{Z_kt^{k+1}}-\gen{p}{k}(A_k t,\ldots,A_0 t)}
  \le \norm{\Rem_\nu(e^{Z_kt^{k+1}})} + \norm{\Rem_\nu(\gen{p}{k}(A_k t, \ldots, A_0 t))}.
\label{eq:errbd:split}
\end{equation}
The proof of the lemma follows from upper bounds for both of these terms.  

We first bound $\norm{\Rem_\nu(\gen{p}{k}(A_k t,\ldots,A_0 t))}$.  Assumptions \ref{item:nq} and \ref{item:Lambda}, together with \lem{expremainder}, imply that
\begin{equation}
  \norm{\Rem_\nu(e^{\lambda_1 A_1 t} e^{\lambda_2 A_2 t} \dots)} 
  \le \Rem_\nu\left(e^{\sum_{q=1}^{N_{p,k}} \lambda_q \norm{A_q} t}\right)\le \Rem_\nu(e^{N_{p,k} Q_{p,k}\Lambda t}),
\end{equation}
so Taylor's theorem gives
\begin{align}
\norm{\Rem_\nu(\gen{p}{k}(A_k t,\ldots, A_0 t))}
&\le \frac{(N_{p,k} Q_{p,k}\Lambda t)^\nu}{\nu!} e^{N_{p,k} Q_{p,k}\Lambda t} \nn
&\le \frac{2(N_{p,k} Q_{p,k}\Lambda t)^\nu}{\nu!},
\end{align}
where we have used assumption \ref{item:smalltime} to simplify the bound.
A standard variant of Stirling's formula, namely~\cite{AbSt64}
\begin{align}
  n!&\ge \sqrt{2\pi n} 
  \, n^n e^{-n},  \label{eq:stirling}
\end{align}
gives
\begin{align}
\norm{\Rem_\nu(\gen{p}{k}(A_k t,\ldots, A_0 t))}
&\le \frac{2}{\sqrt{2\pi\nu}}\left(\frac{e N_{p,k} Q_{p,k} \Lambda t}{\nu}\right)^\nu.
\label{eq:nestfbound}
\end{align}

The corresponding bound for the remainder term of $e^{Z_kt^{k+1}}$ follows similarly:
\begin{equation}
\Norm{\Rem_\nu \left(e^{Z_kt^{k+1}}\right)}
\le \frac{(\norm{Z_k}^{1/(k+1)}t)^{(k+1)\ceil{\frac{\nu}{k+1}}}}
         {\ceil{\frac{\nu}{k+1}}!}e^{\norm{Z_k}t^{k+1}}.\label{eq:Rvactual}
\end{equation}
Since $\norm{Z_k} t^{k+1} \le (\Lambda t)^{k+1} \le \ln 2$ (by assumptions \ref{item:Lambda}, \ref{item:smalltime}, and \ref{item:nqbig}), we have
\begin{align}
\Norm{\Rem_\nu \left(e^{Z_kt^{k+1}}\right)}
&\le \frac{2(\Lambda t)^{(k+1)\lceil \frac{\nu}{k+1} \rceil}}{\ceil{\frac{\nu}{k+1}}!} \nn
&\le \frac{2}{\sqrt{2\pi\nu/(k+1)}} \left(\frac{e^{1/(k+1)} \Lambda t}{(\nu/(k+1))^{1/(k+1)}}\right)^\nu
\end{align}
where the second step uses \eq{stirling} and $\Lambda t\le 1$ (by assumptions \ref{item:smalltime} and \ref{item:nqbig}).  To put this in a similar form to \eq{nestfbound}, we use assumption \ref{item:nqbig} to find
\begin{align}
\Norm{\Rem_\nu \left(e^{Z_kt^{k+1}}\right)}
&\le \frac{2}{\sqrt{2\pi\nu/(k+1)}}\left(\frac{k+1}{e^k} \right)^{\nu/(k+1)}  \left(\frac{eN_{p,k}Q_{p,k} \Lambda t}{\nu^{1/(k+1)}}\right)^\nu.
\end{align}
Since $(k+1)/e^k < 1$ for $k \ge 1$ and $\nu/(k+1) > 1$ by assumption \ref{item:order}, we have
\begin{equation}
  \sqrt{k+1} \left( \frac{k+1}{e^k} \right)^{\nu/(k+1)} < \frac{(k+1)^{3/2}}{e^k},
\end{equation}
so
\begin{align}
\Norm{\Rem_\nu \left(e^{Z_kt^{k+1}}\right)}&\le \frac{2}{\sqrt{2\pi\nu}} (k+1)^{3/2}e^{-k} \left(\frac{eN_{p,k}Q_{p,k} \Lambda t}{\nu^{1/(k+1)}}\right)^\nu.
\end{align}
For $k\ge 1$ we have $(k+1)^{3/2}e^{-k}\le 2^{3/2}/e < 1.05$, so
\begin{align}
\norm{e^{Z_kt^{k+1}}-\gen{p}{k}(A_k t,\ldots,A_0 t)}
&\le \frac{2}{\sqrt{2\pi\nu}}\left(\frac{eN_{p,k} Q_{p,k} \Lambda t}{\nu}\right)^\nu
+ \frac{2.1}{\sqrt{2\pi\nu}} \left(\frac{eN_{p,k} Q_{p,k} \Lambda t}{\nu^{1/(k+1)}}\right)^\nu \nn
&\le \frac{4.1}{\sqrt{2\pi\nu}}\left(\frac{eN_{p,k} Q_{p,k} \Lambda t}{\nu^{1/(k+1)}}\right)^\nu
\label{eq:QpUpperbound}
\end{align}
where the second step uses assumption \ref{item:nqbig}.  Since $4.1/\sqrt{2\pi\nu} \le 4.1/\sqrt{6\pi} < 1$, the result follows.
\end{proof}

Note that the bound of \lem{errbd} does \emph{not} directly apply to product formulas that result from using $k>1$ in \thm{multi}, or any of the formulas considered in \thm{evenk}, because such product formulas contain varying powers of $t$.  Error bounds for such cases could be proved by similar reasoning, but are unlikely to give bounds that are as tight, and the cases covered by \lem{errbd} are more relevant for our applications.

The error bound of \lem{errbd} can be used to estimate the maximum size of a time step for product formulas given by \thm{multi} for $k=1$ or any of the product formulas considered in \lem{nested} or~\lem{genericnested}, given upper bounds for $N_{p,k} Q_{p,k}$.  Recall from \sec{ournest} that for $\gen{p}{k}(A_kt\ldots,A_0t)=\nestf{p}{k}(A_kt\ldots,A_0t)$ with constant $k$, $N_{p,k} = O(6^{pk})$.  Furthermore, since $r_p$ and $s_p$ are monotonically decreasing functions of $p$ (for fixed $k$), it can easily be seen that they scale as $1/2+o(1)$ and $1/4+o(1)$, respectively. This similarly implies that $\beta_p=\gamma_p=(1/2)^{1/(k+1)} +o(1)$ and $\mu_p=\nu_p=(1/3)^{1/(k+1)}+o(1)$ for fixed $k$.  Since $Q_{p,k}$ is the maximum value of products of such terms, it follows that $Q_{p,k}\in o(1)$ and thus $N_{p,k}Q_{p,k}\in O(6^{pk})$.  In fact, a more detailed analysis shows that $N_{p,k}Q_{p,k} \in O((2 \sqrt{6})^{pk})$, but this does not significantly improve our results.

We now consider the behavior of $N_{p,k}Q_{p,k}$ for the product formulas constructed in \sec{genericnest}.  From \defn{genprod}, it is easy to see that $N_{p,k} = 5^{2p} N_{1/2,k}$.  If the initial approximation is chosen to be the formula found by recursively applying the group commutator $k$ times, similar to~\cite{JK97}, then
\begin{equation}
N_{1/2,k}=2N_{1/2,k-1} + 2,
\end{equation}
so
\begin{equation}
  N_{1/2,k} = 2^k(\tfrac{1}{2}N_{1/2,1} + 1) - 2. \label{eq:n12}
\end{equation}
Since we have already shown that $Q_{p,k} \in o(1)$,
it follows that $N_{p,k}Q_{p,k}\in O(5^{2p}2^k)$ for the formulas constructed in~\lem{genericnested}.

A direct analytical comparison of the efficiency of the formulas constructed in~\cite{JK97} and those of~\lem{genericnested} is challenging.  It is clear that the construction of of~\cite{JK97} requires a value of $Q_{p,k}\in \Omega(1)$, whereas $Q_{p,k}\in o(1)$ (in fact, $Q_{p,k}\in e^{-\Theta(p)}$) for the method of~\lem{genericnested}.  Since the error bounds provided by~\lem{errbd} scale with $N_{p,k}Q_{p,k}$, a fair comparison requires tight bounds for $Q_{p,k}$.  In fact, these error bounds are likely not tight because they are proved using the triangle inequality.  Thus we cannot rigorously use these error bounds to prove that our formulas are more efficient than those of Jean and Koseleff (denoted $\JK_{p}$ henceforth).  However, we expect that $\nestgc{p}{k}$ should be more efficient than $\JK_{p}$ since the value of $Q_{p,k}$ is significantly smaller in the former case.

\begin{figure}[t!]
\capstart
\includegraphics{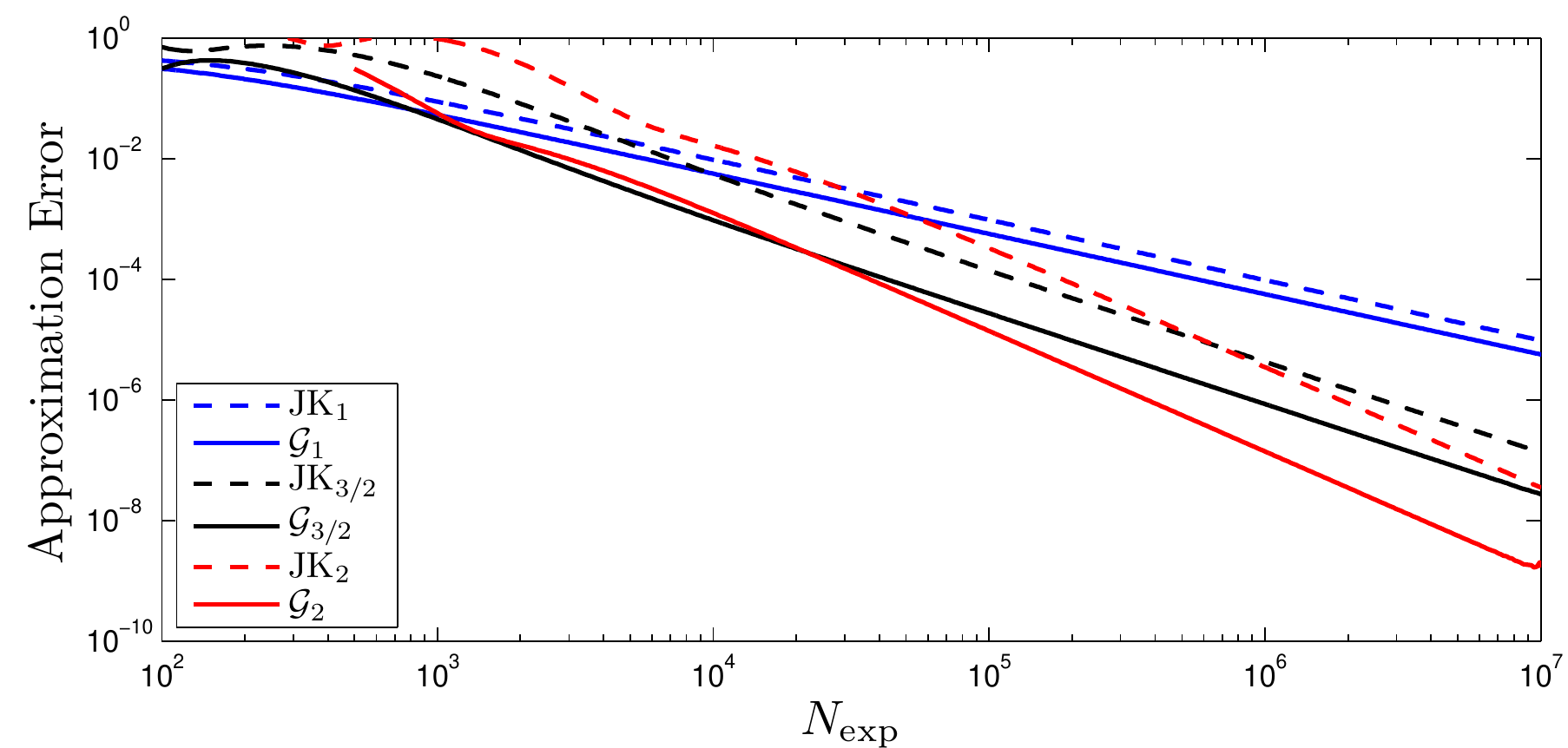}
\caption{Approximation error, as measured by the $2$-norm difference between the approximate and ideal operator exponentials, as a function of the number of exponentials $N_{\exp}$ needed to approximate $e^{[-i \sigma_x,-i\sigma_y]}$ using $\left(\JK_p(-i\sigma_x /\sqrt{r},-i\sigma_y /\sqrt{r})\right)^r$ and $\left(\nestgc{p}{k}(-i\sigma_x /\sqrt{r},-i\sigma_y /\sqrt{r})\right)^r$, where $r$ is the number of time steps used in the approximation.  In both cases, the group commutator is used as the lowest-order approximation.  The data show that our formulas are more efficient than those of~\cite{JK97}.
\label{fig:JKscale}}
\end{figure}

\fig{JKscale} presents numerical evidence for the superiority of $\nestgc{p}{k}$ over $\JK_{p}$.  This figure shows the approximation error of various formulas a function of the number of exponentials used to approximate $e^{-2i\sigma_z}$ by $\left(\JK_p(-i\sigma_x/\sqrt{r},-i\sigma_y/\sqrt{r})\right)^r$ and $\left(\nestgc{p}{k}(-i\sigma_x/\sqrt{r},-i\sigma_y/\sqrt{r})\right)^r$ for $p \in \{1,3/2,2\}$, where $r$ is the number of time steps used.  We see that in all cases that for a fixed number of exponentials, our formulas provide greater accuracy than the formulas of~\cite{JK97}.  We also see evidence of slower convergence to $O(t^{2p+k+1})$ error scaling for $\JK_p$, as expected due to the large value of $Q_{p,k}$ for these formulas.  In addition to improved numerical efficiency, we expect that the formulas given by~\lem{genericnested} are more numerically stable than those of~\cite{JK97} since the backward time steps used in our construction are much shorter.  For these reasons, we suspect that the method of \lem{genericnested} will typically be preferable to that of~\cite{JK97} in both numerical analysis as well as applications to quantum information processing.


\suppress{
\comment{AMC: bounds seem loose---give a better comparison}

For comparison, in the construction of~\cite{JK97}, the maximum duration of each exponential is
\begin{equation}
Q_{p,k}
=\left(\frac{1}{2-2^{\frac{k+1}{2p+k+1}}}\right)^{\frac{1}{2p+k+1}}
 \left(\frac{1}{2-2^{\frac{k+1}{2p+k}}}\right)^{\frac{1}{2p+k}}\cdots
 \left(\frac{1}{2-2^{\frac{k+1}{k+2}}}\right)^{\frac{1}{k+2}}.
\label{eq:Qscaling}
\end{equation}
For these formulas, $Q_{p,k}\ge 1$, in contrast to those of~\lem{genericnested} which have $Q_{p,k}\in o(1)$.  Furthermore, because each term in $Q_{p,k}$ is calculated by taking 
a large
root of the reciprocal of the difference of two nearly identical numbers, the calculation of the times used in this formula is not numerically stable in the limit of large $p$ and $k$.  The larger value of $Q_{p,k}$ for these formulas degrades the accuracy of the resulting product formulas.  Thus formulas generated using~\lem{genericnested} are typically be preferable to those generated using~\cite{JK97}.

To upper bound $Q_{p,k}$, we claim that
\begin{equation}
\frac{1}{2-2^{\frac{k+1}{2p+k+1}}} \le 2(2p+k+1) \label{eq:boundeq}
\end{equation}
for all $k \ge 1$ and all $p \ge 1/2$.  First, observe that
\begin{equation}
\frac{\d}{\d{p}}\left(\frac{1}{2-2^{\frac{k+1}{2p+k+1}}}\right)
= - \frac{2^{\frac{k+1}{2p+k+1}}(k+1) 2\ln 2}{(2^{\frac{k+1}{2p+k+1}}-2)^2 (2p+k+1)^2}
\le 0.\label{eq:boundeq1}
\end{equation}
Thus, for any fixed $k$, the maximum value of the left-hand side of \eq{boundeq} occurs for $p=1/2$.  In other words, it suffices to show that
\begin{equation}
\frac{1}{2-2^{\frac{k+1}{k+2}}} \le 2(2p+k+1). \label{eq:boundeq2}
\end{equation}
Now
\begin{equation}
\frac{\d}{\d{k}}\left(\frac{1}{2-2^{\frac{k+1}{k+2}}}\right)
= \frac{2^{\frac{k+1}{k+2}}\ln 2}{(2^{\frac{k+1}{k+2}}-2)^2(k+2)^2}
\end{equation}
is a positive, monotonically increasing function of $k$, so
\begin{equation}
\frac{\d}{\d{k}}\left(\frac{1}{2-2^{\frac{k+1}{k+2}}}\right)
\le \lim_{k\rightarrow \infty}
    \frac{\d}{\d{k}}\left(\frac{1}{2-2^{\frac{k+1}{k+2}}}\right)
=\frac{1}{\ln 4} < 1.\label{eq:boundeq3}
\end{equation}
Since $\frac{\d}{\d{k}}2(2p+k+1) = 2$, it suffices to show that \eq{boundeq2} holds for $k=1$ and all $p \ge 1/2$.  The left-hand side is $(2-2^{2/3})^{-1} \approx 2.4$, and the right-hand side is $4p+4 \ge 6$, so \eq{boundeq} holds for all $k\ge 1$ and $p\ge 1/2$.

Using \eq{boundeq} in \eq{Qscaling} gives
\begin{equation}
  Q_{p,k} \le (2(2p+k+1))^{\frac{1}{2p+k+1}} (2(2p+k))^{\frac{1}{2p+k}} \cdots (2(k+2))^{\frac{1}{k+2}}.
\end{equation}
Since $(2x)^{1/x}$ is a monotonically decreasing function of $x$ for $x>e/2$,
we see that $(2(2p+k+1))^{\frac{1}{2p+k+1}}$ is a monotonically decreasing function of $p$ and $k$ for $p\ge 1/2$ and $k\ge 1$.  Therefore, since $p\ge 1/2$, we have $Q_{p,k}\le 6^{2p/3}$.
The formulas given by~\cite{JK97} contain $N_{p,k} \in O(3^{2p}2^{k})$ exponentials, so $N_{p,k}Q_{p,k}\in O((3 \cdot 6^{1/3})^{2p}2^k)$, which is worse than the $O(5^{2p}2^k)$ scaling derived for the formulas from~\lem{genericnested}.  
This suggests that the results in~\cite{JK97} may be less efficient than formulas derived from~\lem{genericnested} for approximating many exponentials of commutators within error $\epsilon$.  We demonstrate this in more detail at the end of \sec{lintime}.
}

\section{Near-Linear Scaling With Total Evolution Time}\label{sec:lintime}

\lem{errbd} bounds the distance between the true evolution $U(t)$ and the approximation $\nestf{p}{k}(A_k t,\ldots,A_0 t)$.  Although the formulas are very accurate for small $t$, the error (as measured by a sub-multiplicative norm such as the spectral norm) can be large if $\Lambda t>1$.  To construct a product formula approximation that is close to $U(t)$ for general $\Lambda t$, we divide the evolution into short segments and show that the cumulative error from these segments is at most $\epsilon$.

To ensure additivity of approximation errors from each time step, we restrict ourselves to the case where the operators $A_j$ are anti-Hermitian (as in quantum mechanics), a restriction that was unnecessary in previous sections.  By carefully choosing $p$ to be a function of $t$ and $\epsilon$, the number of exponentials needed to approximate $U(t)$ within error $\epsilon$ scales only slightly superlinearly with $t^{k+1}$ (the total evolution time) and sub-polynomially with $1/\epsilon$.

The following lemma is the main result of this section.  It provides an upper bound for the minimum number of time steps, $r$, needed to ensure that the cumulative approximation errors from all $r$ time steps sum to at most $\epsilon$.

\begin{lemma}\label{lem:r}
Let $\{A_j \colon j=1,\ldots,k\}$ be a set of bounded anti-Hermitian operators, let $\gen{p}{k}(A_k t,\ldots,A_0 t)$ be a product formula approximation to $e^{Z_k t^{k+1}}$, and let $\epsilon>0$.  If the 
assumptions of \lem{errbd} are satisfied for the product formula $\gen{p}{k}(A_k t/r^{1/(k+1)},\ldots,A_0 t/r^{1/(k+1)})$, where the integer $r$ satisfies
\begin{equation}
r\ge \frac{\left(\frac{eN_{p,k}Q_{p,k}\Lambda t}{(2p+k+1)^{1/(k+1)}} \right)^{k+1+\frac{(k+1)^2}{2p}}}{\epsilon^{(k+1)/(2p)}},\label{eq:rbd}
\end{equation}
then $\norm{U(t)-\gen{p}{k}(A_k t/r^{1/(k+1)},\ldots,A_0 t/r^{1/(k+1)})^r}\le \epsilon$.
\end{lemma}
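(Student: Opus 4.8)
The plan is to run the standard ``divide into $r$ time steps and add up the errors'' argument. First I would apply \lem{errbd} to the single-step formula $\gen{p}{k}(A_k t/r^{1/(k+1)},\ldots,A_0 t/r^{1/(k+1)})$: writing $\tau \defeq t/r^{1/(k+1)}$, the hypothesis that the assumptions of \lem{errbd} hold for this formula gives
\begin{equation}
\Norm{e^{Z_k \tau^{k+1}}-\gen{p}{k}(A_k \tau,\ldots,A_0 \tau)}
\le \left(\frac{e N_{p,k} Q_{p,k} \Lambda \tau}{\nu^{1/(k+1)}}\right)^{\nu},
\end{equation}
where $\nu=2p+k+1$ is the order of the formula. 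Since $\tau^{k+1}=t^{k+1}/r$, we have $e^{Z_k\tau^{k+1}\cdot r}=e^{Z_k t^{k+1}}=U(t)$, so $U(t) = \left(e^{Z_k\tau^{k+1}}\right)^r$.

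Next I would use the anti-Hermiticity of the $A_j$. Because each $A_j$ is anti-Hermitian, each elementary exponential $e^{\lambda_q A_{j_q}\tau}$ is unitary, so $\gen{p}{k}(A_k\tau,\ldots,A_0\tau)$ is unitary; likewise $e^{Z_k\tau^{k+1}}$ is unitary since $Z_k$ is a (nested) commutator of anti-Hermitian operators and hence anti-Hermitian. For unitaries $V_1,\dots,V_r$ and $W_1,\dots,W_r$ with $\norm{V_i-W_i}\le\delta$, the telescoping identity $\prod V_i - \prod W_i = \sum_i (V_1\cdots V_{i-1})(V_i-W_i)(W_{i+1}\cdots W_r)$ together with submultiplicativity and $\norm{V_i}=\norm{W_i}=1$ gives $\norm{\prod V_i-\prod W_i}\le r\delta$. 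Applying this with $V_i = e^{Z_k\tau^{k+1}}$ and $W_i = \gen{p}{k}(A_k\tau,\ldots,A_0\tau)$ yields
\begin{equation}
\Norm{U(t)-\gen{p}{k}(A_k\tau,\ldots,A_0\tau)^r}
\le r\left(\frac{e N_{p,k} Q_{p,k} \Lambda \tau}{(2p+k+1)^{1/(k+1)}}\right)^{2p+k+1}.
\end{equation}

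Finally I would substitute $\tau = t/r^{1/(k+1)}$, so the right-hand side becomes $r^{1-(2p+k+1)/(k+1)}\left(\tfrac{e N_{p,k} Q_{p,k}\Lambda t}{(2p+k+1)^{1/(k+1)}}\right)^{2p+k+1} = r^{-2p/(k+1)}\left(\tfrac{e N_{p,k} Q_{p,k}\Lambda t}{(2p+k+1)^{1/(k+1)}}\right)^{2p+k+1}$, and demand that this be at most $\epsilon$. Solving $r^{-2p/(k+1)} X^{2p+k+1}\le\epsilon$ for $r$, i.e. $r\ge X^{(2p+k+1)(k+1)/(2p)}/\epsilon^{(k+1)/(2p)}$ with $X = \tfrac{e N_{p,k} Q_{p,k}\Lambda t}{(2p+k+1)^{1/(k+1)}}$, and noting $(2p+k+1)(k+1)/(2p) = k+1 + (k+1)^2/(2p)$, reproduces exactly the bound \eq{rbd} in the statement. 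The main subtlety — and the reason anti-Hermiticity is needed here but not earlier — is precisely the telescoping step: without unitarity the partial products $V_1\cdots V_{i-1}$ and $W_{i+1}\cdots W_r$ could have operator norm growing exponentially in $r$, and the per-step errors would not simply add. Everything else is routine bookkeeping of exponents.
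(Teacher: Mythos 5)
Your proposal is correct and follows essentially the same route as the paper: unitarity of the elementary exponentials (from anti-Hermiticity) gives subadditivity of the per-step errors, each of the $r$ steps has duration $t/r^{1/(k+1)}$ and is bounded by \lem{errbd} with $\nu=2p+k+1$, and solving $r^{-2p/(k+1)}X^{2p+k+1}\le\epsilon$ for $r$ yields \eq{rbd}. The only difference is that you spell out the telescoping identity behind the subadditivity claim, which the paper simply asserts.
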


\begin{proof}
Since the operators $A_j$ are anti-Hermitian, $e^{A_jt}$ and $e^{Z_kt^{k+1}}$ are unitary.  Errors in unitary operations are subadditive, so the total error is at most the number of time steps used in the simulation, $r$, times the error in simulating the commutator exponential during each time step.  
With $r$ time steps, the duration of each step is $t/r^{1/(k+1)}$.
By \lem{errbd}, the error in approximating $e^{Z_kt^{k+1}}$ is at most $\epsilon$ if
\begin{equation}
\left(\frac{eN_{p,k}Q_{p,k}\Lambda t}{(2p+k+1)^{1/(k+1)}r^{1/(k+1)}} \right)^{2p+k+1}\le \frac{\epsilon}{r}\label{eq:rbd2},
\end{equation}
i.e., if
\begin{equation}
\frac{1}{\epsilon}\left(\frac{eN_{p,k}Q_{p,k}\Lambda t}{(2p+k+1)^{1/(k+1)}} \right)^{2p+k+1}\le r^{(2p)/(k+1)}\label{eq:rbd3},
\end{equation}
which is equivalent to the stated condition \eq{rbd} on $r$.
\end{proof}

We can simplify the assumptions needed for \lem{r} 
for certain formulas
because some of the assumptions are automatically satisfied if $\epsilon$ is sufficiently small.  We formalize this in the following corollary.
\begin{corollary}\label{cor:assumptions}
Let $\{A_j \colon j=1,\ldots,k\}$ be a set of bounded anti-Hermitian operators and let $\epsilon>0$.  If Assumptions  \ref{item:order}, \ref{item:nq}, and \ref{item:Lambda} 
 of \lem{errbd} are satisfied for $\nestf{p}{k}(A_kt/r^{1/(k+1)},\ldots,A_0t/r^{1/(k+1)})$ or $\nestgc{p}{k}(A_k t/r^{1/(k+1)},\ldots,A_0 t/r^{1/(k+1)})$, where the integer $r$ satisfies~\eq{rbd} and the error tolerance $\epsilon$ satisfies
\begin{equation}\epsilon \le \left(\frac{e}{(2p+k+1)^{1/(k+1)}} \right)^{2p+k+1}(\ln 2)^{2p}\left(N_{p,k}Q_{p,k}\Lambda t \right)^{k+1},\label{eq:epsrestrict}
\end{equation}
then $\norm{U(t)-\nestf{p}{k}(A_k t/r^{1/(k+1)},\ldots,A_0 t/r^{1/(k+1)})^r} \le \epsilon$ or $\norm{U(t)-\nestgc{p}{k}(A_k t/r^{1/(k+1)},\ldots,A_0 t/r^{1/(k+1)})^r} \le \epsilon$, respectively.
\end{corollary}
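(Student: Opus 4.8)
The plan is to invoke Lemma~\ref{lem:r} directly, so the only work is to verify that the hypothesis~\eqref{eq:epsrestrict} on $\epsilon$ forces the remaining assumptions of Lemma~\ref{lem:errbd}—namely Assumptions~\ref{item:smalltime} and~\ref{item:nqbig}—to hold automatically for the rescaled formula $\nestf{p}{k}(A_k t/r^{1/(k+1)},\ldots,A_0 t/r^{1/(k+1)})$ (and likewise for $\nestgc{p}{k}$), where $r$ is the integer specified by~\eqref{eq:rbd}. Assumptions~\ref{item:order},~\ref{item:nq}, and~\ref{item:Lambda} are assumed outright in the statement; note that Assumption~\ref{item:order} holds for these particular families by Lemma~\ref{lem:nested} (with $\nu = 2p+k+1$) and Lemma~\ref{lem:genericnested}, and that the quantities $N_{p,k}$, $Q_{p,k}$, and $\Lambda$ are unchanged by the rescaling $t \mapsto t/r^{1/(k+1)}$, since $Q_{p,k}$ bounds the average $|\lambda_q|$ and the $\lambda_q$ are all multiplied by the same factor.

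First I would handle Assumption~\ref{item:nqbig}, that $N_{p,k}Q_{p,k}\ge 1$: this is a property of the formula itself, independent of the time argument, and for both $\nestf{p}{k}$ and $\nestgc{p}{k}$ it is immediate since $N_{p,k}$ already exceeds $1$ and the relevant scaling constants ($\beta_p,\gamma_p,\mu_p,\nu_p$, resp.\ their products) are bounded below by a constant; I'd remark that in fact $N_{p,k}Q_{p,k}\ge 1$ holds for all $p,k\ge 1$ for these constructions. The substance is Assumption~\ref{item:smalltime}: after rescaling, the per-step time is $\tau \defeq t/r^{1/(k+1)}$, and we need $\Lambda\tau \le (\ln 2)/(N_{p,k}Q_{p,k})$. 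The key step is to take the lower bound~\eqref{eq:rbd} on $r$, raise $\Lambda t/r^{1/(k+1)}$ to the appropriate power, and show algebraically that~\eqref{eq:rbd} combined with~\eqref{eq:epsrestrict} yields $(N_{p,k}Q_{p,k}\Lambda\tau)^{k+1}\le (\ln 2)^{k+1}$, i.e.\ exactly Assumption~\ref{item:smalltime}. Concretely, from~\eqref{eq:rbd} one has $r^{1/(k+1)} \ge \bigl(e N_{p,k}Q_{p,k}\Lambda t/(2p+k+1)^{1/(k+1)}\bigr)^{1 + (k+1)/(2p)} \epsilon^{-1/(2p)}$, so
\begin{equation}
(N_{p,k}Q_{p,k}\Lambda\tau)^{k+1}
\le \left(\frac{(2p+k+1)^{1/(k+1)}}{e}\right)^{k+1+(k+1)^2/(2p)} \epsilon^{(k+1)/(2p)} (N_{p,k}Q_{p,k}\Lambda t)^{(k+1)\cdot\frac{-(k+1)/(2p)}{1}+\,(k+1)}\!,
\end{equation}
and substituting the upper bound~\eqref{eq:epsrestrict} for the factor $\epsilon^{(k+1)/(2p)}$ cancels the powers of $N_{p,k}Q_{p,k}\Lambda t$, of $e$, and of $(2p+k+1)^{1/(k+1)}$, leaving precisely $(\ln 2)^{k+1}$. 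Taking $(k+1)$-th roots gives Assumption~\ref{item:smalltime}.

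With all five assumptions of Lemma~\ref{lem:errbd} in force for the rescaled formula, Lemma~\ref{lem:r} applies verbatim and gives $\norm{U(t)-\nestf{p}{k}(A_k t/r^{1/(k+1)},\ldots,A_0 t/r^{1/(k+1)})^r}\le\epsilon$, and identically for $\nestgc{p}{k}$. I expect the main obstacle to be purely bookkeeping: keeping the exponents $2p$, $k+1$, $2p+k+1$, and $1+(k+1)/(2p)$ straight through the substitution so that the cancellation in~\eqref{eq:epsrestrict} is exact rather than merely up to a constant. There is no conceptual difficulty—the choice of the right-hand side of~\eqref{eq:epsrestrict} was reverse-engineered precisely to make Assumption~\ref{item:smalltime} follow—so the proof is short once the arithmetic is set up correctly.
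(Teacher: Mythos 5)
Your overall strategy is the same as the paper's: reduce to \lem{r} by checking that the two remaining hypotheses of \lem{errbd} (assumptions \ref{item:smalltime} and \ref{item:nqbig}) follow from \eq{epsrestrict}, and your algebra for assumption \ref{item:smalltime} is correct and matches the paper's calculation exactly --- substituting the lower bound \eq{rbd} for $r$ into $\Lambda t/r^{1/(k+1)}\le \ln 2/(N_{p,k}Q_{p,k})$ and solving for $\epsilon$ reproduces \eq{epsrestrict} precisely.

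However, there is a genuine gap in your treatment of assumption \ref{item:nqbig}. You dismiss $N_{p,k}Q_{p,k}\ge 1$ as ``immediate'' on the grounds that $N_{p,k}>1$ and the scaling constants are bounded below by a constant. That justification does not work: each individual constant $\beta_\ell,\gamma_\ell,\mu_\ell,\nu_\ell,\xi_k$ is indeed bounded below (by $3^{-1/(k+1)}\ge 1/\sqrt{3}$), but $Q_{p,k}$ is controlled by \emph{products} of up to $pk$ such constants, each less than one, so $Q_{p,k}$ decays exponentially --- as the paper notes elsewhere, $Q_{p,k}\in e^{-\Theta(p)}$, and the best lower bound obtained this way is $Q_{p,k}\ge 3^{-pk/2}$ for $\nestf{p}{k}$ (resp.\ $3^{-p}$ for $\nestgc{p}{k}$). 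Knowing only that $N_{p,k}>1$ and $Q_{p,k}\ge c$ for some $c<1$ gives $N_{p,k}Q_{p,k}\ge c$, not $\ge 1$. The missing step, which occupies most of the paper's proof, is a quantitative comparison showing that $N_{p,k}$ grows fast enough to beat this decay: from \eq{Npk} one gets $N_{p,k}\ge (2\cdot 5^{p-1})^k$, whence $N_{p,k}Q_{p,k}\ge (2\cdot 5^{p-1})^k 3^{-pk/2}\ge 2/\sqrt{3}>1$ for $\nestf{p}{k}$, and similarly $N_{p,k}\ge 5^{2p-1}2^k$ together with $Q_{p,k}\ge 3^{-p}$ gives $N_{p,k}Q_{p,k}\ge 5^{2p}2^k3^{-p}/5\ge 1$ for $\nestgc{p}{k}$ (the paper also needs a small argument, via the Baker--Campbell--Hausdorff formula, that $N_{1/2,1}\ge 4$ to seed the latter bound). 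Without some such comparison the claim $N_{p,k}Q_{p,k}\ge 1$ --- which is also used implicitly when you convert the $r$ bound into assumption \ref{item:smalltime} --- is unsupported.
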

\begin{proof}
The argument that the restriction on $\epsilon$ in \eq{epsrestrict} implies assumptions \ref{item:smalltime} and \ref{item:nqbig} of \lem{errbd} is straightforward.  We first demonstrate that $N_{p,k}Q_{p,k}\ge 1$ for $\nestf{p}{k}(A_k t/r^{1/(k+1)},\ldots,A_1 t/r^{1/(k+1)})$ and then use this fact to show the validity of assumption~\ref{item:smalltime}.  We show this by proving loose lower bounds for $N_{p,k}$ and $Q_{p,k}$ and showing that the product of these lower bounds yields a result greater than $1$.  First, we use~\eq{Npk} to see that
\begin{align}
N_{p,k}&\ge 2\cdot 5^{p-1}N_{p,k-1}\ge (2\cdot 5^{p-1})^{k},\label{eq:Npkbd}
\end{align}
where the last inequality follows from solving the prior recursion relation using $N_{p,1}\ge 2\cdot 5^{p-1}$ as the initial condition. 

Since $Q_{p,k}\ge \max_j\{|\lambda_j|\}$, where $\lambda_j$ is the duration of the \Th{j} exponential in $\nestf{p}{k}(A_k t/r^{1/(k+1)},\ldots,A_1 t/r^{1/(k+1)})$ divided by $t/r^{1/(k+1)}$, a lower bound on $|\lambda_j|$ is also a lower bound for $Q_{p,k}$.  The form of $\nestf{p}{k}(A_k t/r^{1/(k+1)},\ldots,A_1 t/r^{1/(k+1)})$ requires that we recursively use the results of~\thm{multi} and~\thm{evenk}.  Note that this recursion must be used $k$ times and each such formula is constructed by $p$ applications of the recursive relations in \thm{multi} or \thm{evenk}.  \thm{multi} and \thm{evenk} imply that each $\lambda_j$ must be a product of at most $pk$ terms of the form $\xi_k$, $\mu_\ell$, $\nu_\ell$, $\beta_\ell$, and $\gamma_\ell$ for $\ell \in \{1,\ldots,p\}$.  Because $\mu_\ell$, $\nu_\ell$, $\beta_\ell$, and $\gamma_\ell$ are monotonically decreasing functions of $\ell$, taking their limit as $\ell\rightarrow \infty$ shows that $\max\{\mu_\ell,\nu_\ell,\beta_\ell,\gamma_\ell\}\ge 1/3^{1/(k+1)}\ge1/\sqrt{3}$.  Furthermore, $\xi_k= 2^{-1/(k+1)}\ge 1/\sqrt{3}$.
Because $\lambda_j$ is a product of $pk$ such terms, it follows that
\begin{equation}
Q_{p,k}\ge \min_j|\lambda_j|\ge 3^{-pk/2}.\label{eq:Qpkbd}
\end{equation}
Combining \eq{Npkbd} and \eq{Qpkbd} gives
\begin{equation}
N_{p,k}Q_{p,k}\ge \frac{(2\cdot 5^{p-1})^k}{3^{pk/2}}\ge \frac{2}{\sqrt{3}}>1.
\end{equation}

For the product formulas generated using~\lem{genericnested}, we have $Q_{p,k}\ge 3^{-p}$.  This is because $\max\{\mu_\ell,\nu_\ell,\beta_\ell,\gamma_\ell\}\ge 1/3^{1/(k+1)}\ge1/\sqrt{3}$ for such formulas since the recursion relation used to generate the formulas is the same as in~\thm{evenk}.  This recursion relation is applied $2p$ times to build a formula with error $O(t^{2p+k+1})$, so $Q_{p,k}\ge 3^{-p}$.  Furthermore, we know that $N_{p,k} \ge 5^{2p-1} 2^k(\tfrac{1}{2}N_{1/2,1} + 1) - 2$ from~\eq{n12}.
It then can be seen that $N_{1/2,1}\ge 4$ because the Baker--Campbell Hausdorff formula can be used to show that if $N_{1/2,1}\le 3$ then no product formula for $e^{[A,B]t^{2}}$ exists with error that scales as $O(t^{3})$. Therefore,
\begin{equation}
N_{p,k} \ge 5^{2p-1} 2^k(3) - 2\ge 5^{2p-1}2^k
\end{equation}
 and hence
\begin{equation}
N_{p,k}Q_{p,k}\ge 5^{2p}2^k3^{-p} \ge 1.
\end{equation}
Therefore $N_{p,k}Q_{p,k}\ge1$ is satisfied for both formulas.

Since $N_{p,k}Q_{p,k}\ge 1$, assumption~\ref{item:smalltime} of \lem{errbd} is implied by
\begin{equation}
\frac{\Lambda t}{r^{1/(k+1)}}\le \frac{\ln 2}{N_{p,k}Q_{p,k}}.\label{eq:rbd4}
\end{equation}
Using our lower bound on $r$ in~\eq{rbd}, we find that~\eq{rbd4} is implied by
\begin{equation}
\frac{\Lambda t \epsilon^{1/(2p)}}{\left(\frac{eN_{p,k}Q_{p,k}\Lambda t}{(2p+k+1)^{1/(k+1)}} \right)^{1+\frac{k+1}{2p}}}\le\frac{\ln 2}{N_{p,k}Q_{p,k}},\label{eq:rbd5}
\end{equation}
which is equivalent to~\eq{epsrestrict} after solving for $\epsilon$ and simplifying the resulting expression.
\end{proof}

Rather than fixing $p$ and choosing $r$ such that the error is at most $\epsilon$, a more cost-effective strategy is to choose $p$ to minimize the total number of exponentials.
The required number of exponentials to achieve error at most $\epsilon$ is
\begin{equation}
N_{\exp}
=N_{p,k} r
\ge N_{p,k}\frac{\left(\frac{eN_{p,k}Q_{p,k}\Lambda t}{(2p+k+1)^{1/(k+1)}} \right)^{k+1+\frac{(k+1)^2}{2p}}}{\epsilon^{(k+1)/(2p)}}\label{eq:nexp}.
\end{equation}
This expression is minimized for some $p$ due to two competing tendencies: $\epsilon^{-(k+1)/(2p)}$ shrinks as $p$ increases, whereas in both methods $N_{p,k}$ grows exponentially in $p$ for $k$ constant, as discussed in \sec{nested}.  We can approximate the optimal value of $p$ by equating two terms:
\begin{equation}
{N_{p,k}^{k+2}}=\left(\frac{\Lambda t}{\epsilon^{1/(k+1)}}\right)^{(k+1)^2/(2p)}.\label{eq:popt}
\end{equation}
Here we neglect the term $N_{p,k}^{(k+1)^2/(2p)}$ since it is approximately constant for $N_{p,k} \in e^{\Theta(p)}$.  We could also include the detailed behavior of $Q_{p,k}$ to more accurately estimate the optimal value of $p$, but doing so complicates the discussion and does not qualitatively change the result.

Taking logarithms of both sides of \eq{popt}, using $N_{p,k}\in e^{\Theta(p_{\rm opt})}$, and considering constant $k$, we find
\begin{equation}
p_{\rm opt}\in \Theta\left(\sqrt{\log\left(\frac{\Lambda t}{\epsilon}\right)} \right).
\end{equation}
Since $p_{\rm opt}$ is sublogarithmic in $\Lambda t/\epsilon$, we find that $N_{p,k} Q_{p,k}$ grows subpolynomially.  Thus, by choosing $p=p_{\rm opt}$, the number of exponentials used to approximate $e^{Z_kt^{k+1}}$ scales as
\begin{equation}
N_{\exp} \in (\Lambda t)^{k+1} \left(\frac{\Lambda t}{\epsilon}\right)^{o(1)}.\label{eq:linscale}
\end{equation}

Similarly to the simulation of sums of Hamiltonians using Suzuki formulas~\cite{Chi04,BACS07}, this shows that the cost of simulating an operator exponential scales only slightly superlinearly with the total evolution time $t^{k+1}$ and subpolynomially with $1/\epsilon$.  (Note that the analogous scaling for the Suzuki formulas~\cite{suz91} follows upon substituting $k=0$.)

\subsection*{Comparison of Efficiency of Product Formulas}


We now use the results of~\lem{r} and~\cor{assumptions} to compare the complexity of approximating an exponential of a commutator using either $\nestf{p}{k}$ or $\nestgc{p}{k}$ product formulas. We know from~\eq{nexp} that our upper bound for the number of exponentials required for a product formula approximation to $e^{Z_k t^{k+1}}$ depends on $N_{p,k}$ and $Q_{p,k}$.  As discussed previously, $Q_{p,k}\in o(1)$ for both formulas and $N_{p,k}$ scales as $O(6^{pk})$ and $O(5^{2p}2^k)$ for $\nestf{p}{k}$ and $\nestgc{p}{k}$, respectively.  Therefore
\begin{equation}\label{eq:nestfscale}
N_{\exp}\in O\left(\frac{6^{pk}\left(\frac{e6^{pk}\Lambda t}{(2p+k+1)^{1/(k+1)}} \right)^{k+1+(k+1)^2/(2p)}}{\epsilon^{\frac{k+1}{2p}}} \right)
\end{equation}
if $\nestf{p}{k}(A_kt,\ldots,A_0t)$ is used to approximate the commutator exponential.  On the other hand, if $\nestgc{p}{k}(A_kt,\ldots,A_0t)$ is used, then
\begin{equation}\label{eq:nestgcscale}
N_{\exp}\in O\left(\frac{5^{2p}2^k\left(\frac{e5^{2p}2^k\Lambda t}{(2p+k+1)^{1/(k+1)}} \right)^{k+1+(k+1)^2/(2p)}}{\epsilon^{\frac{k+1}{2p}}} \right).
\end{equation}


Equations~\eq{nestfscale} and~\eq{nestgcscale} suggest that product formulas derived from~\lem{nested} will be superior for small $k$ whereas formulas generated from~\lem{genericnested} will be superior for larger values of $k$.  However, we know from~\eq{linscale} that the numbers of exponentials required for both of these formulas are the same up to subpolynomial factors.  This suggests that both methods are roughly comparable in the limit of large $\Lambda t$ and $1/\epsilon$.  In the next section, we show that this performance is nearly optimal, so a substantial improvement would require assumptions about the form of $A_k,\ldots,A_0$.

\section{Optimality Proof Via Quantum Search}\label{sec:search}

We have shown that high-order approximations to exponentials of commutators can be constructed with a number of exponentials that scales almost linearly with the evolution time. Here we show that such schemes are nearly optimal by demonstrating that a simulation using a sublinear number of exponentials would violate the lower bound for quantum search \cite{BBBV97}. The argument is similar to the analysis of the continuous-time algorithm for quantum search \cite{FG96}, but uses a Hamiltonian expressed as a commutator instead of as a sum.

\begin{theorem}
Any product formula approximation for $e^{[A,B]\Time}$ for generic $A,B$ consists of $\Omega(\Time)$ exponentials.
\end{theorem}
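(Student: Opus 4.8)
The plan is to reduce this statement to the optimality of quantum search, along the lines sketched just before the theorem. Recall the paper's cost model: an elementary exponential $e^{At}$ or $e^{Bt}$ has unit cost independent of the duration $t$. I would therefore construct operators $A,B$ with the properties that (i) $[A,B]$ generates a quantum search algorithm, (ii) each factor $e^{At}$ can be implemented with $O(1)$ queries to a search oracle, and (iii) each factor $e^{Bt}$ uses no queries at all. A product formula for $e^{[A,B]\Time}$ built from $M$ elementary exponentials would then implement a bounded-error quantum search algorithm using $O(M)$ oracle queries; since the relevant evolution time turns out to be $\Time = \Theta(\sqrt{N})$, with $N$ the size of the search space, and unstructured search requires $\Omega(\sqrt{N})$ queries~\cite{BBBV97}, this forces $M = \Omega(\Time)$, so the bound cannot be beaten in the worst case.

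Concretely, I would take a search problem with an unknown marked item $w \in \{0,\dots,N-1\}$, write $\ket{w}$ for the associated basis state and $\ket{s} = N^{-1/2}\sum_x \ket{x}$ for the uniform superposition, so $\braket{w}{s} = N^{-1/2}$, and set $A \defeq -i\ketbra{w}{w}$ and $B \defeq -i\ketbra{s}{s}$, with $\norm{A} = \norm{B} = 1$ (replacing $B$ by $-B$ if the sign of the resulting rotation is wrong). The key computation is
\begin{equation}
  [A,B] = -\bigl[\ketbra{w}{w},\ketbra{s}{s}\bigr] = -N^{-1/2}\bigl(\ketbra{w}{s} - \ketbra{s}{w}\bigr) = -\sqrt{\tfrac{N-1}{N}}\; N^{-1/2}\bigl(\ketbra{w}{w^\perp} - \ketbra{w^\perp}{w}\bigr),
\end{equation}
where $\ket{w^\perp}$ is the unit vector in $\mathrm{span}\{\ket{w},\ket{s}\}$ orthogonal to $\ket{w}$. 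This exhibits $[A,B]$ as the generator of a rotation at angular rate $\Theta(1/\sqrt{N})$ in the two-dimensional subspace $\mathrm{span}\{\ket{w},\ket{w^\perp}\}$, which contains $\ket{s}$; since $\ket{s}$ makes an angle $\arcsin(N^{-1/2})$ with $\ket{w^\perp}$, there is a time $\Time = \Theta(\sqrt{N})$ with $e^{[A,B]\Time}\ket{s} = \ket{w}$ up to a global phase. This is just the Farhi--Gutmann continuous-time search algorithm~\cite{FG96} with the Hamiltonian $\ketbra{w}{w} + \ketbra{s}{s}$ replaced by a commutator.

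Next I would observe that $e^{At} = \openone - (1-e^{-it})\ketbra{w}{w}$ is a conditional phase on the marked state, implementable with $O(1)$ queries to the oracle for $w$, whereas $e^{Bt} = \openone - (1-e^{-it})\ketbra{s}{s}$ acts only on the known state $\ket{s}$ and needs no queries. Suppose a product formula for $e^{[A,B]\Time}$ — a fixed word $e^{\pm A t_1}e^{\pm B t_2}\cdots$ of length $M$, whose scalar coefficients depend on $\Time$ and on $\norm{A} = \norm{B} = 1$ but not on the hidden $w$ — approximates $e^{[A,B]\Time}$ to operator-norm error at most $\tfrac13$. Applying it to $\ket{s}$ and measuring in the computational basis then returns $w$ with probability at least $(1-\tfrac13)^2 > \tfrac13$, using $O(M)$ oracle queries (each $A$-exponential costs $O(1)$ queries, each $B$-exponential none). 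Since the same word works for every $w$, this is a bounded-error quantum search algorithm of query complexity $O(M)$, so $M \in \Omega(\sqrt{N})$ by~\cite{BBBV97}; because $\Time = \Theta(\sqrt{N})$, this yields $M \in \Omega(\Time)$.

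The step I expect to be the main obstacle is the spectral and timing analysis of $[A,B]$: one must check that the non-orthogonality $\braket{w}{s} \neq 0$ does not spoil the two-level rotation picture, and pin down that the time needed to steer $\ket{s}$ into $\ket{w}$ really is $\Theta(\sqrt{N})$ rather than, say, $o(\sqrt{N})$ or superpolynomial in $N$. A secondary but necessary point is the remark that the coefficients of any product formula for $e^{[A,B]\Time}$ cannot encode $w$, so that each $A$-exponential genuinely costs an oracle query; this is what ties the gate count $M$ to the query count. Finally, a constant approximation error is enough for the reduction — demanding $\epsilon \to 0$ only strengthens the conclusion, and a success probability arbitrarily close to $1$ can be obtained by taking $\Time$ slightly larger so that the ideal evolved state already lies a fixed small distance from $\ket{w}$ before measurement.
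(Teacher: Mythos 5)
Your proposal is correct and follows essentially the same route as the paper: both reduce to the $\Omega(\sqrt{n})$ lower bound for unstructured search by taking $A$ and $B$ proportional to $\ketbra{w}{w}$ and the projector onto the uniform superposition, verifying that $[A,B]$ generates a rotation toward $\ket{w}$ at rate $\Theta(1/\sqrt{n})$ so that $\Time = \Theta(\sqrt{n})$, and observing that each $A$-exponential costs $O(1)$ queries while each $B$-exponential costs none. The spectral details you flag as the main obstacle are exactly what the paper works out explicitly via the eigenvectors $\ket{e_\pm}$ of the restricted generator.
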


\begin{proof}
Consider the problem of searching for an unknown $w \in \{1,\ldots,n\}$ using a black box acting as
\begin{equation}
  \ket{x,b} \mapsto
  \begin{cases}
    \ket{x,b}      & \text{if $x \ne w$} \\
    \ket{x,\bar b} & \text{if $x=w$}.
  \end{cases}
\end{equation}
A quantum computer must make $\Omega(\sqrt{n})$ queries to the black box to solve this problem with bounded error \cite{BBBV97}.

To solve the search problem, we can begin in the state $\ket{+} \defeq \frac{1}{\sqrt{n}}\sum_{x=1}^n \ket{x}$ and try to evolve with the Hamiltonian $i[\ketbra{w}{w},\ketbra{+}{+}]$ for some time $\Time$ such that
\begin{equation}
  e^{[\ketbra{w}{w},\ketbra{+}{+}]\Time}\ket{+}\approx \ket{w}
\end{equation}
for any $w \in \{1,\ldots,n\}$.

The Hamiltonian $i[\ketbra{w}{w},\ketbra{+}{+}]$ generates a rotation in the plane spanned by $\ket{w}$ and $\ket{+}$.  We therefore choose our basis to contain $\ket{w}$ and $\ket{\perp} \defeq \frac{1}{\sqrt{n-1}}\sum_{x \ne w} \ket{x}$. Let
\begin{align}
  Y
  &\defeq \sqrt{n} [\ketbra{w}{w},\ketbra{+}{+}] \nn
  &= \ketbra{w}{+}-\ketbra{+}{w}.
\end{align}
The action of $Y$ on the basis states is
\begin{align}
Y\ket{w}&
         =-\sqrt{\frac{n-1}{n}}\ket{\perp}\\
Y\ket{\perp}&=\sqrt{\frac{n-1}{n}}\ket{w}.
\end{align}
This is simply $i\sqrt{(n-1)/n}$ times the Pauli operator $\sigma_y$ on the relevant subspace, with eigenvectors
\begin{equation}
\ket{e_{\pm}}=\frac{\ket{w} \pm i\ket{\perp}}{\sqrt{2}}
\end{equation}
and corresponding eigenvalues
\begin{equation}
e_{\pm}=\pm i\sqrt{\frac{n-1}{n}}.
\end{equation}
We have
\begin{align}
\ket{w}&=\frac{\ket{e_+}+\ket{e_-}}{\sqrt{2}}\\
\ket{+}&=\frac{\ket{e_+}}{\sqrt{2}}\left(\frac{1}{\sqrt{n}}-i\sqrt{\frac{n-1}{n}} \right)+\frac{\ket{e_-}}{\sqrt{2}}\left(\frac{1}{\sqrt{n}}+i\sqrt{\frac{n-1}{n}} \right).
\end{align}
Defining $\phi$ so that $\tan \phi = \sqrt{n-1}$, the time-evolved state is
\begin{align}
e^{\frac{Yt}{\sqrt{n}}}\ket{+}=\frac{\ket{e_+}}{\sqrt{2}}e^{i\left(\sqrt{\frac{n-1}{n}}\frac{\Time}{\sqrt{n}}-\phi\right)} +\frac{\ket{e_-}}{\sqrt{2}}e^{-i\left(\sqrt{\frac{n-1}{n}}\frac{\Time}{\sqrt{n}}-\phi\right)}.
\end{align}
The evolution then yields the desired state, $\ket{w}$, when the phases of both eigenvectors are the same.  This first occurs when
\begin{equation}
\Time=\frac{\phi n}{\sqrt{n-1}} = \frac{\pi}{2} \sqrt{n} + O(1).
\end{equation}

The exponential of $\ketbra{w}{w}$ for any time can be simulated with $2$ queries to the black box, and the exponential of $\ketbra{+}{+}$ for any time can be simulated without querying the black box.  Thus a product formula approximation of $e^{[\ketbra{w}{w},\ketbra{+}{+}]\Time}$ using $N_{\exp}$ exponentials gives an algorithm for the search problem using $O(N_{\exp})$ queries.  By the $\Omega(\sqrt n)$ lower bound on search, we find that the number of exponentials in the product formula must be $N_{\exp} \in \Omega(\sqrt{n}) = \Omega(\Time)$.
\end{proof}

Thus there is no product formula approximation to $e^{[A,B]t^{k+1}}$, for generic $A$ and $B$, that consists of $o((\Lambda t)^{k+1})$ exponentials.  Equation \eq{linscale} shows that our high-order product formula approximations use $(\Lambda t)^{k+1+o(1)}$ exponentials, so their performance as a function of $\Lambda t$ is nearly optimal.

\section{Applications}
\label{sec:apps}

\subsection{Controlling Quantum Systems}\label{sec:control}

Quantum control is the art of designing sequences of quantum operations (often referred to as pulses) that approximate a desired operation.  This is especially useful when the target operation cannot be directly implemented.  Techniques that are used to design pulse sequences include~\cite{MB12} Lie--Trotter--Suzuki sequences, Dyson series, Fer and Wilcox expansions, Solovay--Kitaev sequences, and the Magnus series.  The latter three techniques explicitly involve implementing exponentials of commutators, which can be turned into pulse sequences using product formulas.  Current approaches to implement such commutator exponentials are typically inaccurate and often must be numerically optimized via a method such as GRAPE \cite{KRK+05,SSK+05}, which uses gradient ascent to numerically optimize the efficiency and robustness of a pulse sequence that implements a desired unitary operation.  The quality of the numerically optimized sequence depends on the initial input sequence, so our results may be useful for designing pulse sequences (and perhaps understanding why a numerically optimized control sequence takes a particular form).

A simple application of our results to quantum control addresses the problem of performing an arbitrary single-qubit rotation given access to two non-orthogonal rotation axes.
Consider the single-qubit Hamiltonian
\begin{equation}
H(t)=B_0 \sigma_z + B_c(t)(\sigma_z+\sigma_x)
\end{equation}
where $B_c(t)$ is a controllable magnetic field that can switch between the values $0$ and $B_c < B_0$.  The goal is to perform the operation
\begin{equation}
U(t)=e^{-i\omega_0\sigma_y t^{2}}
\end{equation}
for some constant $\omega_0$, which is sufficient to establish universal control of the qubit if used in concert with the $z$ rotation provided by the Hamiltonian when $B_c=0$.  We can achieve the desired $y$ rotation as
\begin{equation}
  U(t)=e^{-\omega_0[\sigma_z,\sigma_x]t^{2}/2}=e^{-\omega_0[B_0 \sigma_z, B_0\sigma_z + B_c(\sigma_z+\sigma_x)]t^2/(2B_0B_c)}.
\end{equation}
For simplicity, define
\begin{align}
A&\defeq B_0\sigma_z,\nonumber\\
B&\defeq B_0\sigma_z + \frac{\omega_0}{2B_0}(\sigma_z+\sigma_x),
\label{eq:simpleAB}
\end{align}
which can both be realized by appropriate choices of $B_c$.  Then $[A,B]=i\omega_0\sigma_y$.  \cor{multi} shows that we can approximate $U(t)$ as
\begin{equation}
 U(t)=e^{-iAt/\sqrt{2}}e^{-iBt/\sqrt{2}}e^{iAt/\sqrt{2}}e^{iBt/\sqrt{2}}e^{iAt/\sqrt{2}}e^{iBt/\sqrt{2}}e^{-iAt/\sqrt{2}}e^{-iBt/\sqrt{2}} + O(t^4).\label{eq:genGroupCommutator}
\end{equation}
\thm{multi} can be used to construct higher-order variants of this expression.

Note that we have assumed that we can implement evolutions of the form $\exp(iAt)$ and $\exp(iBt)$ for $t>0$, corresponding to evolution for negative time.  For this single-qubit system, this can easily be done by evolving under $A$ or $B$ for some positive time.  However, this may not be as straightforward in general.

Recent work by Borneman, Granade, and Cory \cite{BGC12} further highlights the role of commutator exponentials in quantum control by using them to suppress undesirable interactions in coupled spin systems and introduce new couplings that are not present in the original Hamiltonian. Their work constructs a nested commutator of the form $e^{i[A,[A, B]]t^{3}}$ using only the ability to implement exponentials of $A$ and $B$ individually.  They achieve this by presenting a product formula for nested commutators that is outside the scope of our formalism (since it exploits the fact that two of the terms in the nested commutator are equal), namely
\begin{equation}
\bgc{1/2}{2}(-iAt,-iBt)\defeq e^{-iAt}e^{-iBt}e^{iAt}e^{iBt}e^{iAt}e^{-iBt}e^{-iAt}e^{iBt}.
\label{eq:bgcdef}
\end{equation}
They show that this product formula obeys
\begin{equation}
\norm{\bgc{1/2}{2}(-iAt,-iBt)-e^{i[A,[A,B]]t^3}} \in O(t^4).\label{eq:bgcscale}
\end{equation}
Here we use the subscript $p=1/2$ since, with $k=2$, the error is $O(t^4)=O(t^{2p+k+1})$ as in our previous product formulas.
The validity of~\eq{bgcscale} follows directly from
\begin{align}
\label{eq:bgcexpand}
\ln(e^{-iAt}e^{-iBt}e^{iAt}e^{iBt})
=-[A,B]t^2+\frac{it^3}{2!}[A+B,[A,B]]+\frac{t^4}{3!}([A,[B,[B,A]]]/2+[A+B,[A+B,[A,B]]])+O(t^5)
\end{align}
and the Baker--Campbell--Hausdorff formula: all terms of order $t^3$ and higher cancel in \eq{bgcdef} except for $it^3[A,[A,B]]$.

Note that $\bgc{1/2}{2}(-iAt,-iBt)$ is actually correct to $O(t^5)$ if $[A,[B,[B,A]]]=0$, which holds, for example, when $A$ and $B$ are Pauli operators.  We refer to the BGC formula as $\bgc{1}{2}(-iAt,-iBt)$ in such cases to reflect the fact that the error is $O(t^5)$.  Also note that by a similar calculation to \eq{bgcexpand},
\begin{equation}
\bgc{1/2}{2}(-iAt,-iBt)^{-1}= e^{-iBt}e^{iAt}e^{iBt}e^{-iAt}e^{-iBt}e^{-iAt}e^{iBt}e^{iAt}
\label{eq:bgcinv}
\end{equation}
satisfies 
\begin{equation}
\norm{\bgc{1/2}{2}(-iAt,-iBt)^{-1}-e^{-i[A,[A,B]]t^3}} \in O(t^4).\label{eq:bgcscaleinv}
\end{equation}
This error term is also $O(t^5)$ when $[A,[B,[A,B]]]=0$.

Even though the BGC formula does not have the symmetries required by \thm{multi} or \thm{evenk}, we can still use our results to refine it into higher-order versions.  Specifically, \lem{genericnested} shows that we can refine it into a product formula with error $O(t^{2p+k+1})$.  The construction described in \lem{genericnested} refines $\bgc{1/2}{2}(-iAt,-iBt)$ into a formula denoted $\bgc{1}{2}(-iAt,-iBt)$ that has error $O(t^{5})$ (rather than $O(t^{6})$ for the reasons described above).  We can then continue to refine the formula to arbitrarily high order by applying the construction recursively.



An example of this generalized BGC formula with $A=i\sigma_x$ and $B=i\sigma_z$ is presented in \fig{bgc}. Here the lowest-order formula is $\bgc{1}{2}(-i\sigma_xt,-i\sigma_zt)$ since $[A,[B,[B,A]]]=0$.  We see that the upper bound of \lem{genericnested} is essentially optimal: the approximation error scales as $O(t^{2p+3})$.  Note that these higher-order BGC formulas use $O(5^{2p})$ exponentials, an improvement over the $O(6^{2p})$ exponentials that would result from applying \lem{nested}.  

\begin{figure}[t!]
\capstart
\includegraphics{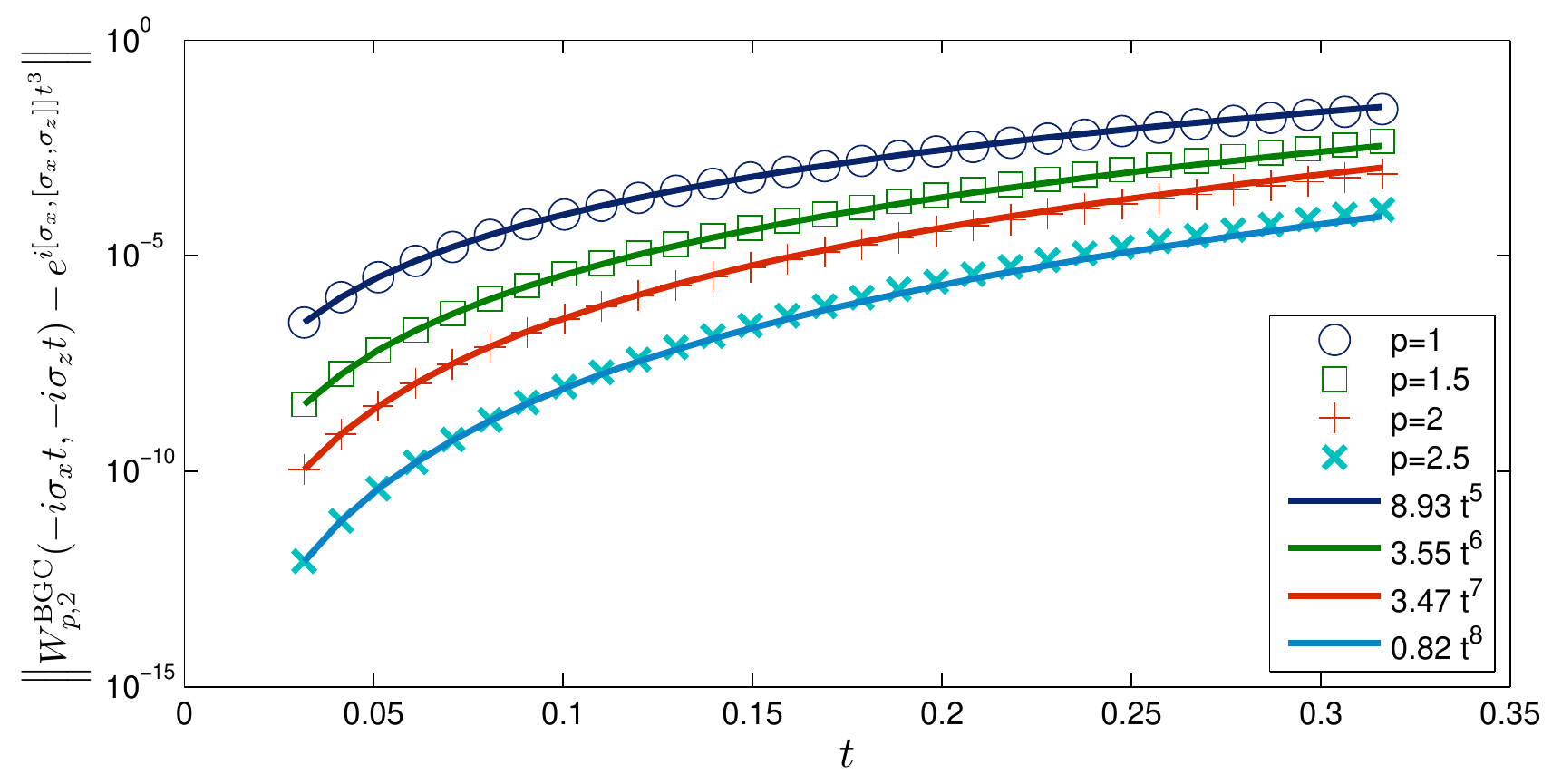}
\caption{Error scaling of the formulas $\bgc{p}{2}(-i\sigma_xt,-i\sigma_zt)$ for $p=1,1.5,2,2.5$ as approximations of $e^{i[\sigma_x,[\sigma_x,\sigma_z]]t^2}$, where $\norm{\cdot}$ is the $2$-norm.
\label{fig:bgc}}
\end{figure}

As another simple application, we could also use the nested commutator formulas provided by~\lem{nested} or \lem{genericnested} to implement $e^{[A,[A,B]]t^3}$ in the example discussed at the beginning of this section, where $A$ and $B$ are given by \eq{simpleAB}.  This produces an effective $x$ rotation, effectively canceling the $\sigma_z$ terms present in $B$.

\subsection{Product Formula Approximations for Anticommutators}

Another application of our results is to implement Hamiltonian evolution according to anticommutators.  The anticommutator of two operators $A$ and $B$ is
\begin{equation}
\{A,B\} \defeq AB+BA.
\end{equation}
We can implement $e^{\{A,B\}t^2}$ using a commutator simulation by enlarging the Hilbert space.  This method can also be used to implement powers of a Hamiltonian, or more generally, products of powers of commuting Hamiltonians, which may be applied to simulate many-body couplings.  In contrast to the techniques described in the previous section, this method can be used to introduce interactions that do not arise naturally as commutators of terms already appearing in the Hamiltonian.  Our product formula approach for simulating anticommutators can therefore be seen as a complementary method to the generalization of the BGC approach described in~\sec{control}.

We extend the Hilbert space from $\H$ to $\H \otimes \C^2$ and create analogs of our operators $A$ and $B$ on this enlarged space, namely
\begin{align}
A' &\defeq A\otimes \sigma_y\nn
B' &\defeq B\otimes \sigma_x,\label{eq:dilation}
\end{align}
where $\sigma_x$ and $\sigma_y$ are Pauli operators.  Then
\begin{align}
[A',B']
&=AB\otimes\sigma_y\sigma_x-BA\otimes\sigma_x\sigma_y \nn
&=-i\{A,B\}\otimes\sigma_z.
\end{align}

Let $\ket{\psi} \in \H$ be any quantum state.  Then
\begin{align}
[A',B']\ket{\psi}\otimes\ket{0}
&= -i\{A,B\}\ket{\psi}\otimes\ket{0},\label{eq:comm2}
\end{align}
so the action of the anticomutator can be simulated using an ancilla qubit in the eigenstate $\ket{0}$ of $\sigma_z$.  In particular,
\begin{align}
e^{-i\{A,B\}t^2}\ket{\psi}\otimes\ket{0}
&=e^{[A',B']t^2}\ket{\psi}\otimes\ket{0} \nn
&=\oddf{p}{1}'(A't,B't)\ket{\psi}\otimes\ket{0}+O(t^{2p+2}).
\end{align}
This shows how to simulate anticommutators using a simulation of commutators. It is straightforward to generalize this construction to simulate nested anticommutators in terms of nested commutators.

While we have already discussed applications of implementing exponentials of commutators, exponentials of anticommutators may seem less natural.  However, one application of anticommutators (between commuting operators) is to simulate many-body couplings.  For example, consider the one-body operators $A_j=\sigma_z^{(j)}$ for $j = 0,\ldots,k$, where the superscript indicates which spin is acted on.  Then
\begin{equation}
\{A_k, \ldots \{A_2,\{A_1,A_0\}\} \ldots \}=2^k \sigma_z^{\otimes k+1},\label{eq:sigmazconst}
\end{equation}
so the Hamiltonian $\sigma_z^{\otimes {k+1}}$ can be simulated using a product of exponentials that each involve only two-body terms, which are typically more natural than $k$-body terms with $k>2$.  This approach provides an alternative to direct simulation methods for many-body terms \cite{NC00,LWG+10,BMS+11,RWS12}.  Since no explicit changes of basis are performed, this simulation may more accurately reproduce features of the ideal evolution (e.g., its behavior in the presence of errors).  

As an example of a complete simulation, consider the toric code Hamiltonian \cite{kit03}
\begin{equation}
H= -J\left(\sum_{v}A_v + \sum_p B_p\right),
\end{equation}
where $A_v \defeq \prod_{i\in v}\sigma_x^{(i)}$ and $B_p \defeq \prod_{j \in p} \sigma_z^{(j)}$ are four-body coupling terms acting on qubits on the edges of a square lattice, where $v$ denotes the set of qubits surrounding a vertex of the lattice and $p$ denotes the set of qubits on the edges of a plaquette.  The vertex and plaquette operators commute, so the time-evolution operator is
\begin{equation}
e^{-iHt}=\prod_{v} e^{iJA_vt}\prod_{ p} e^{iJB_pt}.\label{eq:toricham}
\end{equation}
Each of these exponentials involves a four-body Hamiltonian and therefore can be implemented using only two-body interactions via the approach suggested by~\eqref{eq:sigmazconst} with $k=3$.

The following theorem generalizes the above examples to simulate a Hamiltonian that is a product of powers of commuting matrices and provides a loose upper bound for the scaling of the number of exponentials (which is proportional to the number of quantum operations) needed to simulate the Hamiltonian.  The theorem only considers using the construction provided in~\lem{nested}; generalizations using other product formulas (such as those generated by~\lem{genericnested}) are straightforward.

\begin{theorem}
\label{thm:simproducts}
Let $\{A_\ell \colon \ell=1,\ldots,m\}$ be commuting Hermitian matrices, let $\{\alpha_\ell \colon \ell=1,\ldots,m\}$ be integers, let $k \defeq \left(\sum_{\ell=1}^m \alpha_\ell\right) - 1$, and let $p$ be a positive integer.  Then $\exp(-i 2^k A_1^{\alpha_1} \ldots A_m^{\alpha_m} t^{k+1})$ can be simulated with error at most $\epsilon>0$ using
\begin{equation}
N_{\exp}\in O\left( \frac{6^{pk}\left({6^{pk}\Lambda t} \right)^{k+1+{(k+1)^2}/({2p})}}{\epsilon^{(k+1)/(2p)}}\right)
\end{equation}
exponentials, where $\Lambda \ge 2 \max_j \norm{A_j}$.
\end{theorem}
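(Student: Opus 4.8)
The plan is to reduce the claim to something the machinery of \sec{nested}--\sec{lintime} already handles, namely simulating an exponential of a \emph{nested commutator} of bounded anti-Hermitian operators, and then quote \lem{nested}, \lem{errbd}, \lem{r}, and \cor{assumptions}. The only genuinely new content is the reduction from ``product of powers of commuting matrices'' to ``nested commutator''.

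First I would turn the product of powers into a nested anticommutator. Let $(C_0,C_1,\ldots,C_k)$ be the list consisting of $\alpha_1$ copies of $A_1$, then $\alpha_2$ copies of $A_2$, and so on; this is a list of $\sum_\ell\alpha_\ell=k+1$ operators. Since the $A_\ell$ commute, any two of these listed operators (or products thereof) $X,Y$ satisfy $\{X,Y\}=2XY$, so a one-line induction gives $\{C_k,\{C_{k-1},\ldots,\{C_1,C_0\}\ldots\}\}=2^kC_0C_1\cdots C_k=2^kA_1^{\alpha_1}\cdots A_m^{\alpha_m}$. (For $k=0$ the target is a single exponential $e^{-iA_\ell t}$ and there is nothing to prove, so assume $k\ge1$.)

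Next I would realize this nested anticommutator as a nested commutator on an enlarged space, generalizing the one-level dilation \eq{dilation}. Adjoin a single ancilla qubit and set $\hat C_j\defeq -iC_j\otimes P_j$, where each $P_j\in\{\sigma_x,\sigma_y,\sigma_z\}$ and consecutive Paulis are chosen to anticommute (exactly as $\sigma_y$ and $\sigma_x$ do in \eq{dilation}). Each $\hat C_j$ is anti-Hermitian with $\norm{\hat C_j}=\norm{C_j}$. Because all the $C_j$ commute, peeling the commutator off one layer at a time turns every bracket $\{C_j,\cdot\}$ into $2C_j\cdot$ while multiplying the ancilla Pauli by a further anticommuting Pauli; hence $\hat Z_k\defeq[\hat C_k,[\ldots[\hat C_1,\hat C_0]\ldots]]=-i\,2^kC_0\cdots C_k\otimes P$ for a single Pauli $P$, the overall phase $-i$ being fixed by the choice of the $P_j$. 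Taking the ancilla in an eigenvector $\ket a$ of $P$ of the appropriate sign, $\hat Z_k$ preserves $\H\otimes\ket a$ and acts there as $-i\,2^kA_1^{\alpha_1}\cdots A_m^{\alpha_m}$; hence $e^{\hat Z_k t^{k+1}}$ implements the desired operation on $\H\otimes\ket a$ \emph{exactly}, and it suffices to approximate $e^{\hat Z_k t^{k+1}}$ in operator norm.

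Finally I would apply the existing analysis with $A_j\to\hat C_j$. \lem{nested} gives $\nestf{p}{k}(\hat C_k t,\ldots,\hat C_0 t)=e^{\hat Z_k t^{k+1}+O(t^{2p+k+1})}$ using $N_{p,k}=O(6^{pk})$ exponentials of the $\hat C_j$, and \sec{errbd} shows the associated durations satisfy $Q_{p,k}\in o(1)$, so $N_{p,k}Q_{p,k}=O(6^{pk})$. Assumptions \ref{item:order}, \ref{item:nq}, \ref{item:Lambda} of \lem{errbd} hold ($\nu=2p+k+1>k+1$, and $\Lambda\ge2\max_j\norm{A_j}=2\max_j\norm{\hat C_j}$), and since the $\hat C_j$ are anti-Hermitian, \cor{assumptions} together with \lem{r} applies: splitting the evolution into $r$ steps with $r$ as in \eq{rbd} gives $\norm{e^{-i2^kA_1^{\alpha_1}\cdots A_m^{\alpha_m}t^{k+1}}-\nestf{p}{k}(\hat C_k t/r^{1/(k+1)},\ldots,\hat C_0 t/r^{1/(k+1)})^r}\le\epsilon$ with $N_{\exp}=N_{p,k}\,r$. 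Substituting $N_{p,k}=O(6^{pk})$ and $N_{p,k}Q_{p,k}=O(6^{pk})$ into \eq{nexp} (equivalently \eq{nestfscale}) and absorbing the $e$- and $(2p+k+1)$-dependent factors into the $O$ yields the stated bound.

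The main obstacle is the second step: correctly extending the dilation \eq{dilation} to nested anticommutators while tracking the accumulated phases and Pauli factors, so that the nested commutator of the anti-Hermitian dilated operators reproduces $-i\,2^kA_1^{\alpha_1}\cdots A_m^{\alpha_m}$ exactly on the chosen ancilla subspace. Once this reduction is in hand, the rest is a direct invocation of results already proved. (The restriction on $\epsilon$ needed by \cor{assumptions} only excludes a regime in which the asymptotic statement is not at issue, so it does not affect the conclusion.)
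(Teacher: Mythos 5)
Your proposal is correct and is essentially the paper's own proof: the paper likewise lists the $A_\ell$ with multiplicities as $\A_0,\ldots,\A_k$, dilates with a single ancilla qubit via $\A_0\otimes\sigma_x$ and $\A_j\otimes\sigma_y$ for $j>0$, proves by induction that the nested commutator equals $2^k\prod_q\A_q$ (times $\pm i$ or $\pm 1$) tensored with a single Pauli, and then invokes \eq{nexp} with $N_{p,k}\in O(6^{pk})$ and $Q_{p,k}<1$. The one detail to tighten is your rule that ``consecutive Paulis anticommute'': what the induction actually needs is that each new Pauli $P_{j+1}$ anticommute with the \emph{accumulated} Pauli carried by $\hat Z_j$ (the choice $\sigma_x,\sigma_z,\sigma_y$ has all consecutive pairs anticommuting yet gives $[\hat C_2,\hat Z_1]=0$ because $\sigma_y$ commutes with the accumulated $\sigma_z\sigma_x\propto\sigma_y$), a condition the paper's fixed pattern $\sigma_x,\sigma_y,\sigma_y,\ldots$ satisfies since the accumulated Pauli then alternates between $\sigma_z$ and $\sigma_x$, both of which anticommute with $\sigma_y$.
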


\begin{proof}
The proof is a simple generalization of the previous discussion.  First introduce operators $\A_j$ for $j=0,\ldots,k$ such that each $\A_j$ corresponds to one of the operators $A_\ell$, where $\A_j = A_\ell$ for $\alpha_\ell$ values of $j$.  Then define $\A_j'$ as the isometric extension
\begin{align}
  \A_j'&\defeq \begin{cases}
  \A_0\otimes \sigma_x & \text{if $j=0$}\\
  \A_j\otimes \sigma_y & \text{if $j>0$}.
  \end{cases}
\end{align}
We now show by induction that the nested commutator of all the $\A_j'$ is an isometric extension of $\A_0 \A_1 \ldots \A_k$.  Specifically, we claim that
\begin{align}
[\A_k', [ \ldots, [\A_1',\A_0']\ldots]]
&=\begin{cases}
-i2^k \prod_{q=1}^k \A_q \otimes \sigma_x & \text{$k$ odd}\\
  2^k \prod_{q=1}^k \A_q \otimes \sigma_z & \text{$k$ even} \end{cases}\nn
&=\begin{cases}
-i2^k A_0^{\alpha_0}A_1^{\alpha_1}\cdots A_k^{\alpha_k}\otimes \sigma_x & \text{$k$ odd}\\
  2^k A_0^{\alpha_0}A_1^{\alpha_1}\cdots A_k^{\alpha_k} \otimes \sigma_z & \text{$k$ even}.
\end{cases}
\end{align}

We have already demonstrated the base case in \eq{comm2} by showing that $[\A'_1,\A'_0]= -2i\A_1\A_0\otimes \sigma_z$.  Thus, assume that the claim holds for some given value of $k$.  Then we have
\begin{align}
[\A_{k+1}',[\A_k',[\ldots, [\A_1',\A_0']\ldots]]]
&=\begin{cases}
-i2^{k+1} \prod_{q=1}^{j+1}\A_q \otimes [\sigma_y,\sigma_x] & \text{$k$ odd}\\
  2^{k+1} \prod_{q=1}^{j+1}\A_q \otimes [\sigma_y,\sigma_z] & \text{$k$ even}
\end{cases} \nn
&=\begin{cases}
  2^{k+1} \prod_{q=1}^{j+1} \A_q\otimes \sigma_z & \text{$k+1$ even}\\
-i2^{k+1} \prod_{q=1}^{j+1}\A_q\otimes \sigma_x & \text{$k+1$ odd},
\end{cases}
\end{align}
and the claim follows by induction.

Finally, we simulate the nested commutator exponential by $\nestf{p}{k}(\A_k' t, \ldots, \A_0' t)$ (with an eigenstate of $\sigma_z$ or $\sigma_x$, as appropriate, in the ancilla register).  The theorem follows by using \eq{nexp} for $N_{p,k}\in O(6^{pk})$ and $Q_{p,k} < 1$ (as justified in \sec{lintime}).
\end{proof}

In the example of the toric code Hamiltonian, \thm{simproducts} implies that $e^{-iHt}$ can be simulated with error at most $\epsilon$ using
\begin{equation}
O\left(\frac{6^{3p}\left({6^{3p}(J t)^{1/4}} \right)^{4+{8}/p} n}{(\epsilon/n)^{2/p}}\right)\label{eq:toricscale}
\end{equation}
operations, where $n$ is the number of vertices (or equivalently, the number of plaquettes) in the lattice.  We replace $\epsilon$ with $\epsilon/n$ because simulation errors are subadditive and there are $O(n)$ terms to be simulated in~\eq{toricham}.   Equation~\eq{toricscale} shows that the toric code dynamics can be implemented using a number of two-body interactions that scales near-linearly with $t$ while using only one ancilla qubit (although it may be more convenient to use $O(n)$ ancilla qubits so that nearest-neighbor interactions suffice).


An alternative to \thm{simproducts}
is to estimate the eigenvalues of each $A_i$ using phase estimation and introduce a conditional phase that depends on the eigenvalue (e.g., as described in \cite{SMM08}).  
The method based on commutators
has the advantage that it needs only one ancilla qubit, rather than a logarithmic-size register used to store the estimated phase.  Also, phase estimation algorithms have the drawback of requiring $O(1/\epsilon)$ operations in general.  In contrast, our approach uses $\epsilon^{-o(1)}$ operations, so \thm{simproducts} may be useful for high-precision simulations.

\section{Conclusion}
\label{sec:conclusion}

We have presented recursive constructions that approximate exponentials of commutators as products of exponentials of the elementary terms.  We provided explicit upper bounds on the approximation error, found upper bounds on the number of elementary exponentials needed to approximate an exponential of commutators to within a fixed error tolerance, and established near-optimality of our results by relating commutator simulation to unstructured search.

Our formulas have natural applications in quantum control, where they can be used to introduce or suppress interactions.  Our formulas can be much more accurate than those used in current approaches, and consequently may have applications for designing highly accurate control sequences or dynamical decoupling sequences.  Such sequences may be valuable as initial guesses in numerical pulse finding methods.

Anticommutators can also be simulated using commutator simulation together with a dilation of the Hilbert space to switch the signs of the terms in the anticommutator.  This technique can be used to implement many-body Hamiltonians or simulate powers of Hamiltonians.  In principle, this approach could also be used to simulate an exponential of an arbitrary analytic function $f(H)$ by Taylor expanding to appropriate order and simulating a truncated series, combining our results with Lie--Trotter--Suzuki formulas~\cite{suz91}.

This work raises several natural questions.  Although our product formula approximations have nearly optimal scaling with $t$, alternative methods could come closer to or even saturate the lower bound.  However, this problem may be difficult as it has long been open in the case of exponentials of sums.  
A more tractable goal might be to seek improved performance as a function of $k$ over the formulas generated by~\lem{nested}, which would be useful since the number of exponentials needed for a high-order approximation for large $k$ can be prohibitively expensive using that approach.  Product formulas generated by~\lem{genericnested} have better scaling with $k$ but do not scale as well with $p$.  An approach that combines the best features of both constructions while still giving explicit product formulas would therefore be desirable.

Applications of our results to quantum control are reminiscent of the Solovay--Kitaev theorem~\cite{kit97}, which is also used to generate pulse sequences.  That approach is based on the group commutator, which we also use in \lem{u1}.  Our expression in~\eq{genGroupCommutator} has the advantage of more precisely implementing the desired rotation than the group commutators used in Solovay--Kitaev decompositions~\cite{DN06,KMM12,BS12}.  This suggests that our product formulas may also find application in Solovay--Kitaev algorithms.

Finally, recent work in numerical analysis \cite{BCR99,Chin10} and quantum simulation \cite{CW12} has shown that \emph{multi-product formulas}, which are linear combinations of product formulas, can provide more efficient approximations to operator exponentials.  It might be interesting to investigate whether similar techniques could lead to more efficient approximations for exponentials of commutators.


\begin{acknowledgments}
We thank David Gosset for suggesting the simulation approach presented in equation \eq{dilation}.  We also thank Troy Borneman, Christopher Granade, Thaddeus Ladd, and Seckin Sefi for useful comments and feedback.
This work was supported in part by MITACS, NSERC, the Ontario Ministry of Research and Innovation, and the US ARO/DTO.
\end{acknowledgments}

\suppress{
\appendix

\section{Scaling of $N_{p,k}Q_{p,k}$}\label{app:neqpscale}

\subsection{Product Formulas of \thm{multi}}

We find from the discussion in~\sec{oddk} that for generic $k$,
\begin{align}
N&\le 8\cdot 6^{p-1}\nn
Q_{p,k}& \le \left(\frac{1}{2}(2s_p)(2s_{p-1})\cdots(2s_1)\right)^{1/2}= \left(\frac{1}{2}\prod_{q=1}^p \frac{2^{\frac{k+1}{2q+k+1}}}{2\left(2-2^{\frac{k+1}{2q+k+1}} \right)} \right)^{1/2}.\label{eq:neqp1}
\end{align}
In the limit of large $p$ (and fixed $k$), the upper bound on $Q_{p,k}$ approaches a product of terms that is dominated by terms that monotonically approach $2^{-1/(k+1)}$, and therefore $Q_{p,k} \in o((2-\delta)^{-p/{(k+1})})$, for any $\delta>0$.  This asymptotically justifies the assumption in \lem{errbd} that $N Q_{p,k} \ge 1$.  Similarly, because each term in $Q_{p,k}$ approaches $1/2^{1/(k+1)}$ monotonically from above, $Q_{p,k} \in \omega ((2+\delta)^{-p/(k+1)})$.  In the analogous case of Suzuki formulas, tight asymptotic expressions are known for $Q_{p,k}$~\cite{WBHS10}.  Tight upper and lower bounds for $Q_{p,k}$ that are applicable regardless of the relative sizes of $p$ and $k$ are more difficult to prove here because $Q_{p,k}$ can be large if $k$ is much bigger than $p$.  An upper bound on $Q_{p,k}$ can, however, be numerically computed using~\eq{QpUpperbound} for any $k$.

These scalings can be used to upper bound the error incurred by using $U_{p,1}(At,Bt)$.  We find from \lem{errbd},~\eq{neqp1} and $Q_{p,1}\in o(3/2)^{-p/2}$ that
\begin{equation}
\Norm{e^{[A,B]t^{2}}-U_{p,1}(At,Bt)}
\le \left(\frac{8e 6^{p-1}\left(\frac{1}{2}\prod_{q=1}^p \frac{2^{\frac{k+1}{2q+k+1}}}{2\left(2-2^{\frac{k+1}{2q+k+1}} \right)} \right)^{1/2} \Lambda  t}{2p+2} \right)^{2p+2}\in O\left(\frac{5^{p} \Lambda  t}{2p+2} \right)^{2p+2}.
\end{equation}

\subsection{Product Formulas of \lem{nested}}

Although in practice this procedure will require the use of product formulas with alternating parity of $k$, for simplicity we will assume that the comparably more expensive odd-$k$ formulas are used at every step.  Each such product formula consists of $8\cdot 6^{p-1}$ exponentials.  Of those, at most $4\cdot 6^{p-1}$ will require further approximation using a product formula.  This leads to the following bound for the number of exponentials:
\begin{align}
N&\le 4\cdot 6^{p-1}\left(1+4\cdot 6^{p-1}\left(1+4\cdot 6^{p-1}\left(\cdots\left(1+8\cdot 6^{p-1} \right) \right) \right)\right)\nn
&\le 2\sum_{q=0}^k \binom{k}{q}(4\cdot 6^{p-1})^q=2(1+4\cdot 6^{p-1})^k\nn
&\le 2(5/6)^k 6^{pk}\in O(6^{pk}),\label{eq:Nescale}
\end{align}
for constant $k$.

We find an upper bound on the average duration of one of the exponentials found after unrolling each of the $k$ nested commutators in the expression by using the value of $Q_{p,k}$ found for the odd $k$ case,
\begin{equation}
\tilde{Q}_{p,k}\le\frac{1}{N} \left(\sum_{q=0}^{k-2} \frac{\prod_{j=0}^{q} Q_{p,k-j}^{k-j}}{Q_{p,k-q}^{k-q-1}} (4\cdot 6^{p-1})^{q+1}+2\frac{\prod_{j=0}^{k-1} Q_{p,k-j}^{k-j}}{Q_{p,k-q}^{2k-2}} (4\cdot 6^{p-1})^{q+1}\right).
\end{equation}
This observation arises from the fact that whenever we replace one of the exponentials of commutators, we replace half of the existing exponentials with $8\cdot 6^{p-1}$.  This means that the number of exponentials introduced by removing all nested commutators of $q$ terms is $(4\cdot 6^{p-1})^{q-1}$ if $q\le k-2$, and $2(4\cdot 6^{p-1})^{q-1}$ if $q= k-1$ because none of the exponentials produced in the final step are exponentials of commutators (whereas half the exponentials produced in prior steps were).  We then note that the exponentials introduced in the first step have duration $Q_{p,k}$.  The exponentials that are introduced at the second step are found by decomposing exponentials of the form $e^{B(Q_{p,k}t)^k}$.  Therefore the exponentials that result at the second level of recursion have duration at most $Q_{p,k}^k Q_{p,k-1}$.  We then see by repeating this pattern that the substitution at the $q^{\rm th}$ level of recursion introduces exponentials that have durations at most $(\prod_{j=0}^q Q_{p,k-j}^{k-j})/(Q_{p,k-q}^{k-q-1})$.

It is then obvious from the discussion in \sec{oddk} that, for fixed $k$ and taking $\delta=1/2$, $Q_{p,q}\in O\big((3/2)^{-p/(2q)}\big)$.  We can therefore conclude that
\begin{equation}
\tilde{Q}_{p,k}\in O\left(\frac{1}{N}\left(3/2 \right)^{-pk/2}6^{pk} \right),\label{eq:qpscale}
\end{equation}
which implies that $N \tilde{Q}_{p,k}\in O(5^{pk})$.

\lem{errbd} can be used to bound the error for these formulas regardless of the value of $k$.  We find from~\lem{errbd} that the error in a short time step scales as
\begin{equation}
\Norm{e^{Z_kt^{k+1}}-\nestf{p}{k}(A_k t, \ldots, A_0 t)} 
\in O\left(\frac{5^{pk} \Lambda  t}{2p+k+1} \right)^{2p+k+1}\label{eq:errscale}
\end{equation}
for sufficiently small values of $\Lambda t$.

\subsection{Product Formulas of \thm{evenk}}

Although \lem{errbd} cannot be used to provide error bounds for even-$k$ formulas (unless $B$ is manifestly written as a commutator of bounded operators), we will also provide the scaling of $NQ_{p,k}$ here for completeness.  Reasoning similar to that used in the odd-$k$ case leads us to the conclusion that
\begin{align}
N&\le 5^p\nn
Q_{p,k}& \le \left(2^{k}(4s_p)(4s_{p-1})\cdots(4s_1)\right)^{1/(k+1)}= \left(2^k\prod_{q=1}^p \frac{4^{\frac{k+1}{k+1+2q}}}{\left(4-4^{\frac{k+1}{k+1+2q}} \right)} \right)^{1/(k+1)}.
\end{align}
It is straightforward to see that, for fixed $k$, $Q_{p,k}\in o((3-\delta)^{-p/(k+1)})$ and $Q_{p,k} \in \omega ((3+\delta)^{-p/(k+1)})$ for any $\delta>0$. 

} 

\bibliographystyle{apsrev_title}

\end{document}